\newtheorem{theorem}{Theorem}
\newtheorem{corollary}[theorem]{Corollary}
\theoremstyle{definition}
\newcommand{\cmark}{{\large \ding{51}}}
\newcommand{\xmark}{{\large \ding{55}}}
\newcommand{\N}{\mathbb{N}}
\newcommand{\Vor}{\mathrm{Vor}}
\newcommand{\rVor}{\mathrm{rVor}}
\newcommand{\Gg}{\mathcal{G}}
\newcommand{\td}{\mathrm{td}}
\Crefname{theorem}{Theorem}{Theorems}
\crefname{theorem}{Thm.}{Thms.}
\Crefname{corollary}{Corollary}{Corollaries}
\crefname{corollary}{Cor.}{Cors.}
\tikzset{
	vertex/.style={
		circle,
		scale=0.5,
		draw=black,
		fill = gray!7,
		thick,
                inner sep = 0pt,
                minimum size = 20pt
              },
              edge/.style={
                draw=black,
                very thick,
	}}
\title{Nash Equilibria in Reverse Temporal Voronoi Games}
\author[1]{Simeon Pawlowski}
\author[1]{Vincent Froese}
\affil[1]{\small
  Technische Universit\"at Berlin, Faculty~IV, Institute of Software Engineering and Theoretical Computer Science, Algorithmics and Computational Complexity.\protect\\
  \texttt{s.pawlowski@campus.tu-berlin.de},
  \texttt{vincent.froese@tu-berlin.de}}
\date{\today}
\begin{document}

\maketitle

\begin{abstract}
  We study Voronoi games on temporal graphs as introduced by Boehmer et al.~(IJCAI~'21) where two players each select a vertex in a temporal graph with the goal of reaching the other vertices earlier than the other player. In this work, we consider the \emph{reverse} temporal Voronoi game, that is, a player wants to maximize the number of vertices reaching her earlier than the other player.
  Since temporal distances in temporal graphs are not symmetric in general, this yields a different game.
  We investigate the difference between the two games with respect to the existence of Nash equilibria in various temporal graph classes including temporal trees, cycles, grids, cliques and split graphs. Our extensive results show that the two games indeed behave quite differently depending on the considered temporal graph class.
\end{abstract}

\section{Introduction}

The Voronoi game on graphs is an influence maximization game where two or more competitive players try to influence as many vertices as possible by choosing one initial vertex (or more) which then propagates the information to other vertices.
The Voronoi game on graphs was introduced by~\citet{DT07} and is motivated by modeling the spread of information (e.g.~viral marketing) or diseases within social networks.
Here, each player chooses an initial vertex and wins all vertices with a shorter distance to her chosen vertex than to any other player.
A central game-theoretic question is the existence of a Nash equilibrium, that is, a stable strategy profile where no player has an incentive to deviate, for a Voronoi game on a given graph. This question has been studied for Voronoi games on various classes of graphs such as trees, cycles and grids~\cite{MMPS08,FMM09,SSXZ20}.

Recently, \citet{BFHLNR21} introduced the \emph{temporal} Voronoi game which is played on temporal graphs, that is, graphs where the edge set changes over discrete time steps~\cite{Michail16}. Due to their dynamic nature, temporal graphs are a more realistic model for social networks and are thus a natural extension for the Voronoi game. \citet{BFHLNR21} defined the payoff of a player to be the number of vertices which she reaches earlier (that is, the temporal distance is smaller) than any other player and studied Nash equilibria on various forms of temporal paths, trees and cycles. They posed the question how the temporal Voronoi game behaves for other temporal distance notions.
Note that the temporal distance between vertices in a temporal graph is not symmetric. Hence,
we study the reverse definition where a player gains the vertices that reach her before any other player.

\paragraph*{Related Work.}
The Voronoi game has originally been introduced as a competitive facility location problem in continuous spaces by~\citet{ACCGO04}.
For the discrete Voronoi game on graphs, also complexity-theoretic questions about existence of Nash equilibria have been studied \cite{DT07,TDU11,BBDS15}.
Another similar game on graphs is the \emph{competitive diffusion game} which was introduced by \citet{AFPT10} and for which Nash equilibria
have also been studied on various graph classes \cite{Rosh14,BFT16,FHKO22}.
\citet{BFHLNR21} also studied diffusion games on temporal graphs.

\paragraph*{Our Contributions.}
We study the existence of Nash equilibria in \emph{reverse} temporal Voronoi games with two players on (among others) temporal trees, cycles, cliques, grids, and split graphs (see \Cref{tab:results} for an overview). Our results answer the question of guaranteed existence of Nash equilibria for a wide range of temporal graph classes.
For the sake of completeness, we also obtain results for the ``classic'' temporal Voronoi game on the corresponding temporal graph classes.
It turns out that the two games indeed behave differently on temporal trees, cycles and split graphs.
For example, on temporally connected trees there is always a Nash equilibrium in the reverse game but not in the classic game, while on monotonically growing cycles the opposite is true.
One of the key differences between the two games is that in the classic game the two players can ``catch up'' each other while this effect does not exist for the reverse game.
This seemingly renders the reverse temporal Voronoi game easier to analyze than the temporal Voronoi game where the catch-up dynamics can cause more complicated situations.
An interesting side-observation is that on temporally connected trees the reverse temporal Voronoi game behaves similar to the static Voronoi game on static trees while on monotonically shrinking split graphs the classic temporal Voronoi game behaves analogous to the static case.

\newcommand{\yes}{\makebox[2em][c]{\cmark}}
\newcommand{\yesPar}{\makebox[2em][c]{(\hspace{-.5pt}\cmark\hspace{-.5pt})}}
\newcommand{\yesRef}[1]{\yes (\Cref{#1})}
\newcommand{\no}{\makebox[2em][c]{\xmark}}
\newcommand{\noPar}{\makebox[2em][c]{(\hspace{.67pt}\xmark\hspace{.67pt})}}
\newcommand{\noRef}[1]{\no (\Cref{#1})}
\newcommand{\acc}{\rowcolor{gray!10}}

\begin{table}[t]
	\centering
	\footnotesize
        \def\arraystretch{1.3}
        \caption{Overview of results. The results in the top rows (marked with $^*$) are by \citet{BFHLNR21}.
          A ``\cmark''~indicates guaranteed existence of a Nash equilibrium while
          an ``\xmark''~means that a Nash equilibrium is not guaranteed to exist. 
          Entries in parentheses are implied by other table cells.\vspace{7pt}
        }
	\belowrulesep = 5pt
	\setlength{\tabcolsep}{2.5mm}
	\begin{tabular}{l@{\hspace{5.5mm}}l@{\hspace{5.5mm}}l@{\hspace{5.5mm}}l}
          \toprule
          							& \textbf{Temporally}	& \textbf{Monotonically}		& \textbf{Monotonically}	       	\\
          							& \textbf{connected}	& \textbf{growing}\vspace{3pt}	& \textbf{shrinking}	       	\\
          \midrule[\heavyrulewidth]
          \textbf{Temporal Voronoi}\vspace{5pt}														       	\\
          \acc Temporal Paths$^*$			& \no				& \yesPar				& \no		       			\\
          Temporal Trees$^*$			& \noPar       			& \yes					& \noPar			       	\\
          \acc Temporal Cycles$^*$ 		& \no				& \yes					& \no			       		\\
          Temporal Grids				& \noPar       			& \noRef{thm:Vor-grid}		& \noPar			       	\\
          \acc Temporal Cliques      			& \noPar       			& \noRef{cor:Vor-clique}		& \yesPar	       			\\
          Temp. Complete $k$-partite ($k\ge 2$)	& \noPar       			& \noRef{cor:Vor-clique}		& \yesRef{thm:Vor-k-partit} 	\\
          \acc Temporal Threshold			& \noPar       			& \noPar       				& \yesPar			       	\\
          Temporal Split				& \noPar       			& \noPar       				& \yesRef{thm:Vor-split}		\\
          \midrule[\heavyrulewidth]
          \textbf{Reverse Temporal  Voronoi}\vspace{5pt}										   	           	\\
          \acc Temporal Paths				& \yesPar			& \yesPar				& \noRef{thm:path}		\\
          Temporal Trees				& \yesRef{thm:tree}	& \yesPar				& \noPar       				\\
          \acc Temporal Cycles         			& \noPar       			& \noRef{thm:cycle-grow}  	& \noRef{thm:cycle-shrink}	\\
          Temporal Grids 				& \noPar       			& \noRef{thm:grid}			& \noPar       				\\
          \acc Temporal Cliques      			& \noPar       			& \noRef{thm:clique-grow} 	& \yesPar				\\
          Temp. Complete $k$-partite ($k\ge 2$)	& \noPar       			& \noRef{thm:k-part-grow}	& \yesRef{thm:k-part-shrink}	\\
          \acc Temporal Threshold			& \noPar       			& \noPar       				& \yesRef{thm:k-part-shrink}	\\
          Temporal Split				& \noPar       			& \noPar       				& \noRef{thm:split} 		\\
          \bottomrule
	\end{tabular}
	\label{tab:results}
\end{table}

\paragraph*{Organization of the Paper.}
Our paper is organized as follows: \Cref{sec:prel} introduces basic definitions of temporal graphs and the (reverse) temporal Voronoi game.
The results for the reverse temporal Voronoi game are then presented in \Cref{sec:rVor} followed by the results for the temporal Voronoi game in \Cref{sec:Vor}.
We conclude with some open questions in \Cref{sec:conc}.

\section{Preliminaries} \label{sec:prel}
For $a \le b\in \mathbb{N}$, let $[a,b]:=\{a,a+1,\dots, b\}$ and let $[a]:=[1,a]$.

\paragraph*{Temporal Graphs.}
A \emph{temporal graph}~$\mathcal{G} = (V, (E_t)_{t=1}^\infty)$
consists of a finite set~$V$ of vertices
and an infinite sequence~$(E_t)_{t=1}^\infty$ of edge sets 
$E_t\subseteq \binom{V}{2}$.
If there is an integer~$i$ such that $E_t = E_i$ for all $t \geq i$,
then we define the \emph{lifetime}~$\tau(\Gg)$ of~$\Gg$ to be the minimum such integer.
For our considered game, we can assume that all temporal graphs have finite lifetime~$\tau$.
Hence, we do not specify $E_i$ for $i > \tau$.
The (static) graph~$G_t\coloneqq(V,E_t)$ is called the \emph{$t$-th layer} of~$\mathcal{G}$
and $\mathcal{G}_\downarrow \coloneqq 
(V, E_\downarrow)$ with~$E_\downarrow := \bigcup_{t=1}^\infty E_t$
is the \textit{underlying (static) graph} of~$\mathcal{G}$.

A \emph{temporal path} (resp. \emph{tree}, \emph{cycle} etc.) is a temporal graph
whose underlying graph is a path (resp.\ tree, cycle etc.).
A (static) \emph{$(n\times m)$-grid graph} is a graph which is isomorphic to~$([n]\times[m],\{\{(i,j),(i',j')\}\mid |i-i'|+|j-j'|=1\})$.
A \emph{split graph} is a graph whose vertex set can be partitioned into a clique and an independent set.
A \emph{threshold graph} is a split graph that can be constructed by iteratively adding isolated vertices or dominating vertices
(hence, there always exists a vertex which dominates all non-isolated vertices).

In a  temporal graph $\mathcal{G}=(V,(E_t)_{t=1}^\infty)$, a \emph{temporal walk} from a vertex $u$ to a vertex $v$ is a sequence $(\{v_0\coloneqq u,v_1\},t_1),(\{v_1,v_2\},t_2),\dots, (\{v_{d-1},v_d\coloneqq v\},t_d)$ such that $t_i < t_{i+1}$ for all $i\in[d-1]$ and $\{v_{i-1},v_i\}\in E_{t_i}$ for all $i\in[d]$.
We call $t_d$ the \emph{arrival time} of the temporal walk.
A temporal walk from~$u$ to~$v$ is called \emph{foremost} if there 
is no temporal walk from~$u$ to~$v$ with earlier arrival time.
The \emph{temporal distance} $\td(u,v)$ from $u$ to $v$ is the arrival 
time of a foremost walk from $u$ to~$v$ (we set~$\td(u, v) \coloneqq 0$ if $u=v$).
If there is no such walk, then $\td(u, v) \coloneqq \infty$.
Note that temporal distances are not symmetric, that is, $\td(u,v)\neq \td(v,u)$ is possible.
We say that a vertex \emph{$u$ reaches} a vertex $v$ \emph{until} (\emph{at}) step $t$ if $\td(u,v)\leq t$ ($=t$).

A temporal graph~$\Gg$ is \emph{temporally connected} if $\td(u, v) < \infty$ for all vertex pairs $u, v$.
Further, $\Gg$ is \emph{monotonically growing (shrinking)} 
if edges do not disappear (appear) over time, that is, $E_t\subseteq E_{t+1}$ ($E_{t+1}\subseteq E_t$) for all~$t$.
Note that, if $\Gg_\downarrow$ is connected, then monotonic growth of~$\Gg$ implies temporal connectedness.

\paragraph{(Reverse) Temporal Voronoi Games.}
For a temporal graph $\mathcal{G}=(V,(E_t)_{t=1}^\infty)$ and a number~$k\in \mathbb{N}$ 
of players, \citet{BFHLNR21} introduced the 
$k$-player \emph{temporal Voronoi game} $\Vor(\Gg,k)$ on $\mathcal{G}$.
The \emph{strategy space} of each player~$i\in [k]$ is the vertex set~$V$, that is,
each player~$i$ selects a single vertex $p_i\in V$ (also called \emph{position}).
A \textit{strategy profile} is a tuple~$(p_1 , \dots, p_k ) \in 
V^k$ containing the chosen vertices of all players.
In $\Vor(\Gg,k)$, the strategy profile $(p_1 , \dots, p_k )$ determines a vertex subset $U_i(p_1,\ldots,p_k)\coloneqq\{v\in V\mid  \forall j\neq i:\td(p_i,v)<\td(p_j,v)\}$ for each player~$i$.
The \emph{payoff} of player~$i$ is then $u_i(p_1,\ldots,p_k)\coloneqq |U_i(p_1, \dots, p_k)|$.
That is, each player ``wins'' those vertices which she reaches earlier than all other players.
In the \emph{reverse temporal Voronoi game} $\rVor(\Gg,k)$, we define the set $U_i(p_1,\ldots,p_k)\coloneqq\{v\in V\mid  \forall j\neq i:\td(v,p_i)<\td(v,p_j)\}$, that is,
each player ``wins'' those vertices which reach her earlier than any other player.

In both games, the players aim to maximize their payoffs.
Hence, player~$i$ plays a \emph{best response} to the other 
players in~$(p_1, \dots, p_k)$ if for all
vertices~$p' \in V$ it holds that \[u_i(p_1,\dots, p_{i-1}, p', p_{i+1}, \dots, p_k) \leq u_i(p_1, \dots, p_k).\]
A strategy profile~$(p_1, \dots, p_k)$ is a \textit{Nash equilibrium} if every player plays a best response to the other players.
In this paper, we only consider $k=2$ players.

\section{Reverse Temporal Voronoi Games ($\rVor$)}\label{sec:rVor}

In this section we prove the results for the reverse temporal Voronoi game shown in \Cref{tab:results}.

\subsection{Temporally Connected Graphs}

For temporally connected graphs, a Nash equilibrium is only guaranteed if the underlying graph is a tree.
In fact the Nash equilibrium is analogous to the Voronoi game on static graphs.
Note that for all other temporal graph classes considered in this paper, there are examples without a Nash equilibrium already for monotonically growing graphs (as shown in \Cref{sec:mon_grow}).

\begin{theorem}\label{thm:tree}
  On every temporally connected tree $\mathcal T$, there exists a Nash equilibrium in $\rVor(\mathcal{T},2)$.
\end{theorem}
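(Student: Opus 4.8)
The plan is to mirror the proof of the corresponding statement for the \emph{static} Voronoi game on trees, where an equilibrium is obtained by placing both players at a centroid and an adjacent vertex. Write $n \coloneqq |V|$ (the case $n=1$ is trivial, both players playing the unique vertex). Pick a \emph{centroid} $c$ of the underlying tree $\mathcal T_\downarrow$, i.e.\ a vertex such that every connected component of $\mathcal T_\downarrow - c$ has at most $\lfloor n/2\rfloor$ vertices (such a vertex exists in every tree). Let $C_1,\dots,C_d$ be the components of $\mathcal T_\downarrow - c$ with $|C_1|$ maximum, let $c'$ be the unique neighbour of $c$ in $C_1$ (unique since $\mathcal T_\downarrow$ is a tree), and claim that the strategy profile $(p_1,p_2)=(c,c')$ is a Nash equilibrium in $\rVor(\mathcal T,2)$.

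The one fact about temporal distances I would isolate first is a \emph{one‑directional separation lemma}: if $q$ lies on the unique $v$–$w$ path in $\mathcal T_\downarrow$ and $q\ne w$, then $\td(v,q)<\td(v,w)$. This holds because any foremost temporal walk from $v$ to $w$, read as a walk in $\mathcal T_\downarrow$, must visit $q$, and since $q\ne w$ it still has at least one more edge to traverse, so its arrival time strictly exceeds the arrival time of its prefix that first reaches $q$; temporal connectedness is used here only to ensure such a walk exists. Note that this lemma is deliberately \emph{asymmetric} in $v$ and $w$, which is exactly why it is the right tool in the temporal setting, where one cannot argue via ``which of $p_1,p_2$ is closer''. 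The point of choosing $p_1,p_2$ \emph{adjacent} is that the $p_1$–$p_2$ path has no interior vertices, so every vertex ``projects'' onto one of the two endpoints and there is never a vertex whose owner genuinely depends on the edge‑availability pattern along the path — that would be the hard phenomenon, and adjacency rules it out by construction.

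Using the lemma I would compute the payoffs at $(c,c')$. Since $\{c,c'\}$ is the only edge of $\mathcal T_\downarrow$ between $C_1$ and $V\setminus C_1$: for $v\in C_1$ the $v$–$c$ path passes through $c'\ne c$, so $\td(v,c')<\td(v,c)$ and $v\in U_2$; for $v\notin C_1$ the $v$–$c'$ path passes through $c\ne c'$, so $\td(v,c)<\td(v,c')$ and $v\in U_1$. Hence $U_1(c,c')=V\setminus C_1$ and $U_2(c,c')=C_1$, so $u_1(c,c')=n-|C_1|$ and $u_2(c,c')=|C_1|$, and the centroid bound $|C_1|\le\lfloor n/2\rfloor$ gives $u_1(c,c')\ge u_2(c,c')$.

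It then remains to rule out profitable deviations. If player $1$ moves from $c$ to $p'$: the cases $p'=c$ and $p'=c'$ are immediate (payoff unchanged, resp.\ $0$); if $p'\in C_1\setminus\{c'\}$ then for every $v\notin C_1$ the $v$–$p'$ path passes through $c'\ne p'$, so $v\in U_2(p',c')$, whence $u_1(p',c')\le|C_1|\le n-|C_1|$ (this is the only place the centroid property enters); if $p'\in C_i$ with $i\ge2$ then for every $v\in C_1$ the $v$–$p'$ path passes through $c'\ne p'$, so $C_1\subseteq U_2(p',c')$ and $u_1(p',c')\le n-|C_1|$. Symmetrically, if player $2$ moves from $c'$ to $p'$: the cases $p'=c'$ and $p'=c$ are immediate; if $p'\in C_1\setminus\{c'\}$ then every $v\notin C_1$ lies in $U_1(c,p')$ (its path to $p'$ passes through $c\ne p'$), so $u_2(c,p')\le|C_1|$; and if $p'\in C_i$ with $i\ge2$ then every $v\notin C_i$ lies in $U_1(c,p')$, so $u_2(c,p')\le|C_i|\le|C_1|$. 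In every case the deviator does not strictly gain, so $(c,c')$ is a Nash equilibrium. I do not expect a serious obstacle here: once the separation lemma and the centroid‑plus‑neighbour candidate are fixed, each deviation reduces in one line to the lemma; the only thing that needs care is choosing, in each case, the correct separating vertex ($c$ or $c'$) to apply the lemma to, and remembering that for the ``move into the largest component'' deviation one genuinely needs $|C_1|\le n/2$.
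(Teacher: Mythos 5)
Your proof is correct and follows essentially the same route as the paper: the same centroid-plus-neighbour-in-a-largest-component profile, with the deviation analysis resting on the fact that a vertex separating $v$ from $w$ in the tree is reached by $v$ strictly earlier than $w$. Your explicit one-directional separation lemma and case-by-case deviation check simply spell out details the paper's proof leaves implicit.
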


\begin{proof}
  Let~$T\coloneqq\mathcal{T}_\downarrow=(V,E)$ with~$|V|\ge 2$ (the case~$|V|=1$ is trivial). Let~$p_1$ be a centroid of~$T$ (that is, a vertex~$v$ that minimizes the maximum size of any connected component in~$T-v$) and let~$p_2$ be a neighbor of~$p_1$ in a maximum-size component~$C=(V',E')$ of~$T-p_1$. Then $(p_1,p_2)$ is a Nash equilibrium. Note that~$U_1(p_1,p_2)=V\setminus V'$ since $\mathcal{T}$ is temporally connected and all vertices in~$V\setminus V'$ reach~$p_1$ before~$p_2$. Analogously, it holds $U_2(p_1,p_2)=V'$. Since~$p_1$ is a centroid, we have $u_1(p_1,p_2)\ge |V|/2$ and~$u_2(p_1,p_2)\le |V|/2$. Clearly, player~2 cannot improve since she could only win vertices within a component of~$T-p_1$ and~$C$ is already maximal. Also player~1 cannot improve since she could only win a subset of vertices of~$V\setminus V'$ or~$V'$.
\end{proof}

\subsection{Monotonically Growing Graphs}\label{sec:mon_grow}

In the following, we show that disallowing edges to disappear does not guarantee a Nash equilibrium (except for trees). The following theorem is in contrast to the classic temporal Voronoi game where a Nash equilibrium always exists for monotonically growing cycles as shown by \citet{BFHLNR21}.

\begin{theorem}\label{thm:cycle-grow}
  There is a monotonically growing cycle $\mathcal C$ such that there is no Nash equilibrium in $\rVor(\mathcal{C},2)$.
\end{theorem}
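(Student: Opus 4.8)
The plan is to construct an explicit monotonically growing temporal cycle and argue directly that no strategy profile $(p_1,p_2)$ is stable. Since the desired conclusion is a negative existence statement, it suffices to exhibit one concrete counterexample and then verify it by exhausting the (essentially few) structurally distinct strategy profiles up to symmetry. I would begin by fixing an underlying cycle $C_n$ on vertices $v_0,\dots,v_{n-1}$ for a small-to-moderate $n$, and prescribe the appearance times of the $n$ cycle edges so that the temporal graph is monotonically growing (each edge, once present, stays forever) while the temporal distances $\td(v,p)$ are ``skewed'' enough that the catch-up effect present in the classic game — where the ability to travel in both directions around the cycle rescues the existence of equilibria — is broken in the reverse game.

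The key conceptual point I would exploit is the asymmetry noted in the introduction: in $\rVor$, vertex $v$ is won by the player whose position $v$ can reach first. On a monotonically growing cycle, a vertex $v$ can send a temporal walk clockwise or counterclockwise, but the arrival time along each direction is governed by the \emph{latest} edge-appearance time encountered along that arc (more precisely, by how the increasing appearance times interleave along the arc). By choosing the appearance pattern appropriately — for instance, making one direction ``fast'' for some vertices and ``slow'' for others in an incompatible way — I can arrange that for every candidate pair of positions, one of the two players can strictly increase her captured set by shifting her position by one step along the cycle. The heart of the argument is a case analysis: after using the cyclic symmetry of the construction to normalize $p_1=v_0$, I consider the possible values of $p_2$ (grouped by distance from $p_1$ along each arc), compute $U_1,U_2$ in each case from the prescribed temporal distances, and identify in each case an improving deviation for player~1 or player~2.

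The main obstacle I anticipate is the design of the appearance times themselves: the construction must simultaneously (i)~be monotonically growing, (ii)~yield a temporal cycle that is temporally connected (or at least have well-defined finite temporal distances for the relevant pairs), and (iii)~make the ``rock-paper-scissors'' deviation structure hold uniformly across all $O(n)$ strategy profiles. Getting all three at once typically forces $n$ to be not too small and requires the appearance times to break the reflection symmetry of the cycle (otherwise a symmetric profile tends to survive). I would likely settle on a specific small instance — concretely something like $n=4$ or $n=5$ with an explicit list of edge-appearance times — draw the temporal distance matrix, and then the verification that every profile admits an improving move reduces to a finite check. The remaining work is bookkeeping: tabulating $\td(v,p_i)$ for all $v$ and both players, reading off the Voronoi partition, and exhibiting the deviation; none of this is deep, but it must be done carefully since a single miscomputed distance can invalidate the counterexample.
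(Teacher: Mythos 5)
Your proposal is a plan rather than a proof: the entire mathematical content of this theorem is the explicit counterexample together with its verification, and you supply neither. You describe the right high-level strategy (exhibit one monotonically growing temporal cycle and check by a finite case analysis that every profile admits an improving deviation), but you never fix the number of vertices, the edge-appearance times, or the temporal distances, and you never carry out the case analysis. Saying you ``would likely settle on a specific small instance --- concretely something like $n=4$ or $n=5$'' leaves the existence claim unestablished; it is not at all clear that such small growing cycles work (the paper's counterexample uses $7$ vertices, where the argument is that \emph{each} player can always secure at least $4$ of the $7$ vertices against any fixed position of the opponent, so no profile can be stable --- a global counting argument you do not anticipate, and one that needs enough vertices to run).

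Two further points in your sketch would not survive contact with an actual construction. First, you propose to ``use the cyclic symmetry of the construction to normalize $p_1=v_0$,'' but any assignment of non-uniform appearance times destroys cyclic symmetry, so this normalization is unavailable; the paper instead exploits the \emph{reflection} symmetry of its specific instance to restrict $p_1$ to four candidate vertices. Second, your claim that the appearance times must ``break the reflection symmetry of the cycle (otherwise a symmetric profile tends to survive)'' is contradicted by the paper's example, which is reflection-symmetric and still has no equilibrium. These are not fatal to the general strategy, but together with the missing instance and missing verification they mean the proposal, as written, does not prove the theorem.
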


\begin{proof}
  Consider the temporal cycle $\mathcal{C}\coloneqq ([7],E_1,E_2)$ where $E_1\coloneqq\{\{i,i+1\}\mid i\in[6]\}\setminus \{\{2,3\}\}$ and $E_2\coloneqq E_1\cup \{\{2,3\},\{7,1\}\}$ (see \Cref{fig:cycle}).
  To show that there is no Nash equilibrium, we show that both players can always win at least~4 vertices regardless of the choice of the other player. Since there are only~7 vertices in total, there cannot be a Nash equilibrium.
  To show the above claim, we assume that~$p_1\in[2,5]$ (by symmetry of~$\mathcal{C}$).
  The following cases are easily verified:
  \begin{compactitem}
    \item If~$p_1=2$, then, for~$p_2=5$, it holds $[4,7]\subseteq U_2(p_1,p_2)$.
    \item If~$p_1=3$, then, for~$p_2=4$, it holds $[4,7]\subseteq U_2(p_1,p_2)$.
    \item If~$p_1=4$, then, for~$p_2=7$, it holds $\{1,2,6,7\}\subseteq U_2(p_1,p_2)$.
    \item If~$p_1=5$, then, for~$p_2=4$, it holds $[1,4]\subseteq U_2(p_1,p_2)$.
  \end{compactitem}
\end{proof}

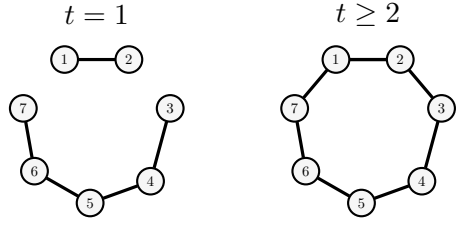
\begin{figure}[t]
  \center 
  \begin{tikzpicture}[scale=1]
    \node[vertex] (1) at (115:1) {1};
    \node[vertex] (2) at (65:1) {2};
    \node[vertex] (3) at (15:1) {3};
    \node[vertex] (4) at (315:1) {4};
    \node[vertex] (5) at (265:1) {5};
    \node[vertex] (6) at (215:1) {6};
    \node[vertex] (7) at (165:1) {7};
    \draw[edge] (1) -- (2);
    \draw[edge] (3) -- (4) -- (5) -- (6) -- (7);
    \node at (90:1.5) {$t=1$};
  \end{tikzpicture}
  \hspace{1cm}
  \begin{tikzpicture}[scale=1]
    \node[vertex] (1) at (115:1) {1};
    \node[vertex] (2) at (65:1) {2};
    \node[vertex] (3) at (15:1) {3};
    \node[vertex] (4) at (315:1) {4};
    \node[vertex] (5) at (265:1) {5};
    \node[vertex] (6) at (215:1) {6};
    \node[vertex] (7) at (165:1) {7};
    \draw[edge] (1) -- (2);
    \draw[edge] (1) -- (2) -- (3) -- (4) -- (5) -- (6) -- (7) -- (1);
    \node at (90:1.5) {$t\ge 2$};
  \end{tikzpicture}
  \caption{A monotonically growing temporal cycle without a Nash equilibrium.}
  \label{fig:cycle}
\end{figure}
\noindent
Recall that a Nash equilibrium is guaranteed for the temporal Voronoi game on monotonically growing cycles.
An example for a Nash equilibrium in the temporal Voronoi game~$\Vor(\mathcal{C},2)$ on the cycle from the proof above, is $(5,4)$. Here, both players win three vertices. 

From \Cref{thm:cycle-grow}, we easily obtain an analogous result for monotonically growing cliques (and with it also for split and threshold graphs.).

\begin{corollary}\label{thm:clique-grow}
  There is a monotonically growing clique~$\mathcal{Q}$ such that there is no Nash equilibrium in $\rVor(\mathcal{Q},2)$.
\end{corollary}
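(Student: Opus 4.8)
The plan is to reduce the problem to \Cref{thm:cycle-grow} by embedding the bad temporal cycle~$\mathcal{C}$ from that proof into a monotonically growing clique so that the temporal distances relevant for the reverse game are preserved. Concretely, I would take the same vertex set~$[7]$, keep the first two layers~$E_1, E_2$ of~$\mathcal{C}$, and then at some later step~$t=3$ add all remaining edges, so that $G_3 = G_4 = \dots$ is the complete graph on~$[7]$; call this temporal graph~$\mathcal{Q}$. By construction $\mathcal{Q}$ is monotonically growing and its underlying graph is the clique~$K_7$, so $\mathcal{Q}$ is a monotonically growing clique.

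The key observation is that adding edges only at step~$3$ cannot create any temporal walk with arrival time~$\le 3$ that did not already exist in~$\mathcal{C}$ — more precisely, the only new walks are those using an edge from the newly added set at step~$3$, and such a walk has arrival time exactly~$3$. In the proof of \Cref{thm:cycle-grow} one checks that in each of the four cases the claimed vertices reach~$p_2$ strictly before they reach~$p_1$, and a quick inspection shows these ``winning'' temporal distances are all at most~$2$ (the cycle has only~$7$ vertices and lifetime~$2$). Hence for every vertex~$v$ and every position~$p$, the value $\min(\td_{\mathcal{Q}}(v,p), 3)$ equals $\min(\td_{\mathcal{C}}(v,p), 3)$, and in particular the strict inequalities $\td(v,p_2) < \td(v,p_1)$ used in the four cases are preserved verbatim in~$\mathcal{Q}$. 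Therefore the same four responses show that against any~$p_1 \in [2,5]$ (and by symmetry any~$p_1$) player~2 can win at least~$4$ of the~$7$ vertices, so no strategy profile can be a Nash equilibrium in $\rVor(\mathcal{Q},2)$.

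I would write this up as: define~$\mathcal{Q}$, note monotonic growth and that $\mathcal{Q}_\downarrow = K_7$, observe that the temporal distances at most~$2$ coincide with those in~$\mathcal{C}$ (since the extra edges appear only from step~$3$ on and any walk using such an edge arrives no earlier than step~$3$), and conclude that the four cases of the proof of \Cref{thm:cycle-grow} carry over unchanged, giving player~2 at least~$4$ vertices against every~$p_1$. The only point requiring a little care — and the place I expect to spend the most attention — is verifying that the relevant winning distances in the cycle proof are genuinely~$\le 2$, so that introducing the clique edges at step~$3$ does not shorten a losing distance into a winning one for the opponent or otherwise disturb the strict inequalities; but since the cycle has lifetime~$2$ this is immediate, and one could alternatively push the clique edges to any step~$t \ge 3$ to leave even more slack.
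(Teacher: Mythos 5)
Your high-level idea is the same as the paper's: take the temporal cycle~$\mathcal{C}$ from \Cref{thm:cycle-grow} and complete it to a clique at a later time step so that the relevant temporal distances are unchanged. However, your execution has a genuine gap: you add all clique edges already at step~$3$ and justify this with the claim that the ``winning'' temporal distances in the cycle proof are all at most~$2$ because ``the cycle has lifetime~$2$''. This conflates lifetime with temporal distances. Lifetime~$2$ only means $E_t=E_2$ for all $t\ge 2$; temporal walks keep using later time steps, and in~$\mathcal{C}$ the pairwise temporal distances go up to~$4$ (e.g.\ $\td(1,5)=\td(2,5)=4$, $\td(7,3)=4$). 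Several of the strict inequalities in the proof of \Cref{thm:cycle-grow} compare a distance of~$3$ against a distance of~$4$: in the case $p_1=3$, $p_2=4$, vertex~$7$ is won via $\td(7,4)=3<4=\td(7,3)$, and in the case $p_1=5$, $p_2=4$, vertices~$1$ and~$2$ are won via $3<4$. In your graph~$\mathcal{Q}$ every distance that was at least~$3$ in~$\mathcal{C}$ collapses to exactly~$3$ (the direct clique edge at step~$3$), so these comparisons become ties: $4,7\notin U_2(3,4)$ and $1,2\notin U_2(5,4)$ in~$\mathcal{Q}$. Consequently the claimed inclusions $[4,7]\subseteq U_2(3,4)$ and $[1,4]\subseteq U_2(5,4)$ fail, and the ``each player can always secure at least~$4$ of the~$7$ vertices'' argument does not carry over as you assert. (It turns out that your~$\mathcal{Q}$ still has no Nash equilibrium, but verifying this requires a new best-response analysis that your write-up does not contain.)

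The repair is exactly what the paper does: since $\mathcal{C}$ is temporally connected, let $d$ be an upper bound on \emph{all} pairwise temporal distances (here $d=4$ suffices) and add the clique edges only at step $d+1$. Then no temporal distance changes at all --- not merely the distances capped at some threshold --- so the game on the clique is literally identical to the game on the cycle and the non-existence of a Nash equilibrium transfers without re-examining any cases. Your closing remark that one could push the clique edges to ``any step $t\ge 3$'' is precisely the point that is false for $t=3$ and $t=4$; pushing them to step $5$ or later would make your argument correct.
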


\begin{proof}
  Consider the monotonically growing cycle~$\mathcal{C}$ from \Cref{thm:cycle-grow} where no Nash equilibrium exists. Since~$\mathcal{C}$ is temporally connected, all pairwise temporal distances are finite, that is, $\td(u,v)\le d$ for all~$u,v\in [7]$ and some~$d\in\N$.
  Hence, any further edge appearing after time~$d$ is not changing the temporal distance of any vertex pair.
  Therefore,~$\mathcal{Q}\coloneqq ([7],E'_1,\ldots,E'_{d+1})$ with~$E'_t\coloneqq E_t$ for all~$t\le d$ and $E'_{d+1}\coloneqq \binom{[7]}{2}$ is a monotonically growing clique without a Nash equilibrium.
\end{proof}

Next, we show that also on monotonically growing grids there is no guarantee for a Nash equilibrium.

\begin{theorem}\label{thm:grid}
  There is a monotonically growing grid $\mathcal G$ such that there is no Nash equilibrium in $\rVor(\mathcal{G},2)$.
\end{theorem}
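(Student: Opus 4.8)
The plan is to produce a single explicit monotonically growing grid $\mathcal{G}=([n]\times[m],(E_t)_t)$ and then show, exactly in the spirit of \Cref{thm:cycle-grow,thm:clique-grow}, that it admits no stable strategy profile. The cleanest argument is the counting argument already used for the cycle: since for any profile $(p_1,p_2)$ the sets $U_1(p_1,p_2)$ and $U_2(p_1,p_2)$ are disjoint, we always have $u_1+u_2\le nm$; so it suffices to find a threshold $k$ with $2k>nm$ such that, no matter which vertex player~1 picks, player~2 has a response winning at least $k$ vertices. In a Nash equilibrium both players play a best response, hence $u_1+u_2\ge 2k>nm$, a contradiction. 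If no uniform threshold is available for the grid at hand, the fallback is the equivalent statement that the best-response correspondence has no fixed point: keep a short list of the profiles that survive one best-response step of each player and verify that the list contains no cycle.

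For the construction I would keep $nm$ small and exploit that, just like a cycle, a grid becomes a genuinely asymmetric temporal object once edges may be switched on but never off. Concretely: at $t=1$ take a sparse connected spanning subgraph of the grid --- for instance a Hamiltonian ``snake'' through the grid with one or two edges omitted --- let the remaining grid edges appear at $t=2$, and, if convenient, append one final redundant layer once all finite temporal distances have been realized, exactly as in the proof of \Cref{thm:clique-grow}. The omitted time-$1$ edges are the only source of asymmetry, so they must be placed so that \emph{every} candidate position is ``outflankable''; in particular the central, centroid-like vertex --- which is the natural stabilising choice and the one that works in the static Voronoi game --- must be rendered non-stable by the delay pattern. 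An alternative realisation is to note that the boundary of the $3\times 3$ grid is an $8$-cycle and to let its four central edges appear only after everything on the boundary has settled, as in \Cref{thm:clique-grow}; this reduces the task to building a monotonically growing $8$-cycle with no Nash equilibrium, a self-contained variant of \Cref{thm:cycle-grow} --- though now one needs each player to be able to secure a strict majority of the eight boundary vertices, since the central vertex then wins only itself.

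The verification then fixes $p_1$ and produces a good $p_2$ together with an explicit witness set of $\ge k$ vertices that provably reach $p_2$ strictly before $p_1$, read directly off the two- or three-layer structure, just as the four bullet points do in \Cref{thm:cycle-grow}. I would first use the automorphisms of $\mathcal{G}$ that respect the chosen time labels --- typically only one reflection survives --- to cut the positions of $p_1$ down to a handful. I expect the main obstacle to be precisely this bookkeeping: a grid has many more vertices and far fewer symmetries than a $7$-cycle, so there are many more $(p_1,p_2)$ pairs to check, and the delayed edges must be designed so that simultaneously no ``wide'' position near the centroid is stable \emph{and} the extra vertices that a grid has over a cycle do not accidentally make some off-centre profile stable. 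A secondary subtlety is that monotone growth genuinely constrains the asymmetry one can build: since edges can only be added, the time-$1$ layer must already be connected enough for all the foremost walks used in the witness sets to exist, which limits which grid edges may be postponed.
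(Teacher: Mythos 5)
There is a genuine gap: the statement is existential, so a proof must exhibit one concrete monotonically growing temporal grid and actually verify that no strategy profile is stable. Your proposal does neither --- it describes a search strategy (``keep $nm$ small'', ``omit one or two snake edges'', ``alternatively use the boundary $8$-cycle of the $3\times 3$ grid''), defers the entire verification (``the verification then fixes $p_1$ and produces a good $p_2$\dots''), and explicitly flags the casework as the expected obstacle without resolving it. Moreover, your preferred route, the majority-counting argument from \Cref{thm:cycle-grow} (a threshold $k$ with $2k>nm$ that either player can always secure), is not known to be achievable on any grid and in fact fails on the natural small candidates: in the paper's example, against a player positioned on the degree-three vertex~$2$, the opponent can win at most three of the six vertices, so no uniform threshold exists there, and you give no alternative grid where one does. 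Your fallback --- showing the best-response correspondence has no fixed point --- is indeed the right mechanism, but it is precisely the finite case analysis you leave undone, and its feasibility hinges on choosing the delayed edges correctly, which is the actual content of the proof.

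For comparison, the paper's proof takes a $(2\times 3)$-grid on vertices $[6]$ with first layer $\{\{1,2\},\{1,4\},\{3,6\},\{5,6\}\}$ (note: disconnected, which is harmless since monotone growth of a connected underlying graph gives temporal connectivity --- your requirement that the first layer be connected is an unnecessary restriction) and adds $\{2,3\},\{2,5\},\{4,5\}$ at $t=2$; it then uses the graph's symmetry to reduce to $p_1\in\{1,2,4\}$ and checks in three short bullet points that every best response admits a further improving deviation, i.e.\ a best-response cycle. So the intended structure of your argument is compatible with the paper's, but without a concrete construction and the accompanying distance computations your proposal is a plan, not a proof.
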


\begin{proof}
  Consider the graph~$\Gg\coloneqq ([6],E_1,E_2)$ (depicted in~\Cref{fig:grid}) with
  \begin{align*}
    E_1&\coloneqq\{\{1,2\},\{1,4\},\{3,6\},\{5,6\}\}\text{ and}\\
    E_2&\coloneqq E_1\cup \{\{2,3\},\{2,5\},\{4,5\}\}.         
  \end{align*} 
  By symmetry, assume that~$p_1\in\{1,2,4\}$.
  \begin{compactitem}
    \item If $p_1=1$, then the best response by player~2 is~$p_2=2$, where~$U_1(1,2)=\{1,4\}$ and~$U_2(1,2)=\{2,3,5,6\}$. But~$p_1=6$ yields~$U_1(6,2)=\{3,5,6\}$.
    \item If $p_1=2$, then~$p_2=6$ is the best response with~$U_1(2,6)=\{1,2,4\}$ and~$U_2(2,6)=\{3,5,6\}$.
  But then~$p_1=5$ is the best response with~$U_1(5,6)=\{1,2,4,5\}$.
    \item If~$p_1=4$, then a best response is~$p_2=6$ with~$U_1(4,6)=\{1,2,4\}$ and~$U_2(4,6)=\{3,5,6\}$.
      But then the best response is~$p_1=5$ again.
      Also $p_2=5$ is a best response, where~$U_1(4,5)=\{1,4\}$ and~$U_2(4,5)=\{3,5,6\}$.
      Then, $p_1$ can improve with $U_1(1,5)=\{1,2,4\}$.
      \end{compactitem}
\end{proof}
\noindent
Interestingly, note that (1,6) is a Nash equilibrium for the classic temporal Voronoi game on the above grid of \Cref{thm:grid}. (However, this is not the case for all monotonically growing grids as shown in \Cref{thm:Vor-grid}.)

Again, from \Cref{thm:grid}, we easily obtain the following corollary.

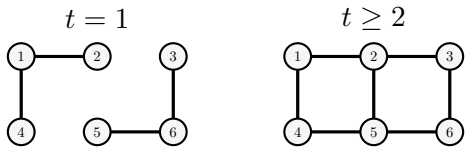
\begin{figure}[t]
  \center 
  \begin{tikzpicture}[scale=1]
    \node[vertex] (1) at (0,1) {1};
    \node[vertex] (2) at (1,1) {2};
    \node[vertex] (3) at (2,1) {3};
    \node[vertex] (4) at (0,0) {4};
    \node[vertex] (5) at (1,0) {5};
    \node[vertex] (6) at (2,0) {6};
    \draw[edge] (4) -- (1) -- (2);
    \draw[edge] (5) -- (6) -- (3);
    \node at (1,1.5) {$t=1$};
  \end{tikzpicture}
  \hspace{1cm}
  \begin{tikzpicture}[scale=1]
    \node[vertex] (1) at (0,1) {1};
    \node[vertex] (2) at (1,1) {2};
    \node[vertex] (3) at (2,1) {3};
    \node[vertex] (4) at (0,0) {4};
    \node[vertex] (5) at (1,0) {5};
    \node[vertex] (6) at (2,0) {6};
    \draw[edge] (1) -- (2) -- (3) -- (6) -- (5) -- (4) -- (1);
    \draw[edge] (2) -- (5);
    \node at (1,1.5) {$t\ge 2$};
  \end{tikzpicture}
  \caption{A monotonically growing temporal grid without a Nash equilibrium.}
  \label{fig:grid}
\end{figure}

\begin{corollary}\label{thm:k-part-grow}
  For every~$k\ge 2$, there is a monotonically growing complete $k$-partite graph $\mathcal K$ such that there is no Nash equilibrium in $\rVor(\mathcal{K},2)$.
\end{corollary}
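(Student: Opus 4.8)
The plan is to transform the monotonically growing grid $\mathcal{G}=([6],E_1,E_2)$ of \Cref{thm:grid} — which is temporally connected (its underlying graph is connected and it grows monotonically) and admits no Nash equilibrium in $\rVor(\mathcal{G},2)$ — into a complete $k$-partite graph by adding a handful of edges and $k-2$ extra vertices that stay isolated until the very end. First I would record the structural fact that $\mathcal{G}_\downarrow=([6],E_2)$ is bipartite with parts $A=\{1,3,5\}$ and $B=\{2,4,6\}$: every edge of $E_2$ runs between $A$ and $B$. Since $\mathcal{G}$ is temporally connected, fix $d\in\N$ with $\td_{\mathcal{G}}(u,v)\le d$ for all $u,v\in[6]$.

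For the construction, introduce new vertices $w_1,\dots,w_{k-2}$ (none if $k=2$) and define $\mathcal{K}$ on $[6]\cup\{w_1,\dots,w_{k-2}\}$ so that layer $1$ equals $E_1$, layers $2,\dots,d$ equal $E_2$, and every layer from time $d+1$ onward equals $H$, the edge set of the complete $k$-partite graph with parts $A$, $B$, $\{w_1\},\dots,\{w_{k-2}\}$. As $E_1\subseteq E_2\subseteq H$ (the edges of $E_2$ are $A$-$B$ edges), $\mathcal{K}$ is monotonically growing and $\mathcal{K}_\downarrow$ is precisely this complete $k$-partite graph.

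The technical heart is computing temporal distances in $\mathcal{K}$. Every edge incident to some $w_i$ first appears at time $d+1$, so any temporal walk visiting a $w_i$ arrives no earlier than $d+1$; together with $\td_{\mathcal{G}}(u,v)\le d$ this gives $\td_{\mathcal{K}}(u,v)=\td_{\mathcal{G}}(u,v)$ for all $u,v\in[6]$ (the late edges add no shortcut among the original six vertices) and $\td_{\mathcal{K}}(x,w_i)=\td_{\mathcal{K}}(w_i,x)=d+1$ for every vertex $x\neq w_i$. Hence in any profile $(p_1,p_2)$ of $\rVor(\mathcal{K},2)$, each $w_i$ that is not itself chosen is reached by both players exactly at time $d+1$ and so contributes to neither payoff, and if $p_1,p_2\in[6]$ the payoffs are exactly the $\rVor(\mathcal{G},2)$ payoffs. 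Since $\mathcal{G}$ has no Nash equilibrium, for every profile with $p_1,p_2\in[6]$ some player has a profitable deviation inside $[6]$, and this deviation remains profitable in $\mathcal{K}$.

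Finally I would rule out profiles in which a player — say player $1$, by player symmetry — picks some $w_i$. If the opponent sits on a grid vertex, then $\td(v,w_i)=d+1>\td(v,p_2)$ for every $v\in[6]$, so player $1$ wins only $w_i$, i.e.\ a single vertex; switching to her best response against $p_2$ inside $[6]$ wins her at least three grid vertices, where the bound $\ge 3$ is read off from the three cases in the proof of \Cref{thm:grid} together with the automorphism $(1\,6)(2\,5)(3\,4)$ of $\mathcal{G}$. If instead the opponent also sits on some $w_l$, player $1$ can move to any grid vertex and win all of $[6]$. Either way she strictly improves, so no profile is a Nash equilibrium. The only place demanding care is the distance bookkeeping around the new vertices — in particular verifying that the edges added at time $d+1$ create no faster walk among the original grid vertices — which the choice of $d$ takes care of.
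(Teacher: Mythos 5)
Your proposal is correct and follows essentially the same route as the paper: take the monotonically growing grid of \Cref{thm:grid}, complete its bipartition $\{1,3,5\},\{2,4,6\}$ and attach the $k-2$ extra singleton-part vertices only after time $d$, observe that distances within $[6]$ are unchanged and the new vertices are tied for both players, and rule out choosing a new vertex since it yields payoff $1$ while a grid vertex yields strictly more. The only (harmless) differences are bookkeeping: you make the intermediate layers and the both-players-on-new-vertices case explicit and use the bound ``at least three'' (via the automorphism $(1\,6)(2\,5)(3\,4)$) where the paper is content with ``at least two.''
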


\begin{proof}
  Consider the monotonically growing grid~$\Gg$ from \Cref{thm:grid} (which is~bipartite) where no Nash equilibrium exists. Since all pairwise temporal vertex distances are finite (at most some~$d\in\N$), we can modify~$\Gg$ as follows without introducing a Nash equilibrium:
  $\mathcal{K}\coloneqq ([4+k],E'_1,\ldots,E'_{d+1})$, where~$E'_t\coloneqq E_t$ for all~$t\le d$ and \[E'_{d+1}\coloneqq E'_d\cup\{\{1,6\},\{3,4\}\}\cup\bigcup_{j=7}^{4+k}\{\{j,i\}\mid i < j\}.\]
  Note that~$\mathcal{K}$ is a monotonically growing complete $k$-partite graph where all temporal distances between vertices in~$[6]$ are the same as in~$\Gg$. It remains to check that there is no Nash equilibrium. If both players pick vertices in~$[6]$, then the outcome is exactly the same as in~$\Gg$ (all newly introduced vertices have equal temporal distance to both players and hence are not won by any player). Hence, this is not a Nash equilibrium. If a player picks one of the new vertices, then this is never optimal, since she only wins this single vertex, whereas she could win at least two vertices by choosing some vertex in~$[6]$.
\end{proof}

\subsection{Monotonically Shrinking Graphs}

We now consider temporal graphs where no edges are allowed to appear over time.
It turns out that among the graph classes we considered, a Nash equilibrium is only guaranteed if the game is essentially ``decided'' in the first layer, that is, on temporal complete $k$-partite and threshold graphs.
We start with excluding Nash equilibria from all other considered temporal graph classes.

\begin{theorem}\label{thm:path}
  There is a monotonically shrinking path~$\mathcal P$ such that there is no Nash equilibrium in $\rVor(\mathcal P,2)$.
\end{theorem}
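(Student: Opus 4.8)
The goal is to exhibit a concrete monotonically shrinking temporal path $\mathcal P = ([n],E_1,\dots,E_\tau)$ and check directly that no strategy profile is a Nash equilibrium. Note first that such a path must fail to be temporally connected, since on a temporally connected tree \Cref{thm:tree} already gives an equilibrium; so the construction should have $E_1$ equal to the full path and then delete edges so that the path falls apart, and — crucially — it should fall apart in an \emph{asymmetric} way, so that it does not simply behave like the static path (on which the centroid together with a neighbour is an equilibrium).

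The plan is: (1)~pick such a $\mathcal P$ on a handful of vertices; (2)~compute the full temporal-distance matrix, using that on a path a foremost walk from $v$ to $w$ must traverse exactly the $\lvert v-w\rvert$ edges of the $v$--$w$ subpath in strictly increasing time, so $\td(v,w)=\lvert v-w\rvert$ if those edges can be taken in an order consistent with their deletion times and $\td(v,w)=\infty$ otherwise — in particular the set of vertices that can reach a fixed $v$ is an interval around $v$; (3)~read off, for each ordered pair of positions $(a,b)$, the value $w(b,a):=\lvert\{v: \td(v,b)<\td(v,a)\}\rvert$, i.e.\ the payoff of the player at $b$ against the player at $a$; and (4)~rule out equilibria. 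For step~(4) the cleanest route is to verify that for \emph{every} position $a$ there is a position $b$ with $w(b,a)>n/2$: then in any profile $(p_1,p_2)$ both players, playing best responses, would obtain more than $n/2$ vertices, contradicting $u_1+u_2\le n$. If arranging this ``$>n/2$'' property is too demanding, the fallback is the case analysis of \Cref{thm:cycle-grow,thm:grid}: for each $p_1$ (up to the symmetry of $\mathcal P$) list all best responses $p_2$ of player~2 and exhibit a strictly improving deviation for player~1, so that $(p_1,p_2)$ is not an equilibrium.

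I expect step~(1) to be the main obstacle: the difficulty is designing $\mathcal P$ so that no position is ``safe''. On a path the natural equilibrium candidates are the profiles where one player sits at a near-central vertex reaching a majority of vertices and the other sits next to it, and all such profiles must be destroyed simultaneously — a first attempt that merely disconnects the path into two roughly equal intervals typically leaves exactly such a stable ``split'' profile. The way I would attack this is to let different path edges vanish at different times so that the interval of vertices reaching $v$ genuinely slides with $v$ and is never large enough for any single position to secure a majority against its best counter-response, intuitively making the reachability pattern ``rotate'' along the path the way it does around the cycle in \Cref{thm:cycle-grow}; if seven vertices with two layers do not suffice, using more layers (which permit longer foremost walks and hence a richer reachability pattern) or a few more vertices should give the extra room. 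Once a working $\mathcal P$ is fixed, the remainder is a finite, if slightly tedious, verification over the $O(n^2)$ profiles, which a figure of the (two or three) distinct layers — as in \Cref{fig:cycle,fig:grid} — makes transparent.
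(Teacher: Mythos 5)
Your proposal is a plan for finding a counterexample, not a proof: the entire mathematical content of \Cref{thm:path} is the existence of a concrete monotonically shrinking path with no Nash equilibrium, and that is precisely the part (your step~(1)) that you leave open, explicitly flagging it as ``the main obstacle.'' The surrounding strategy (compute temporal distances, do a finite best-response case analysis) is the same as the paper's and is fine, but without an actual instance nothing has been proved. Moreover, your preferred ``clean route'' for step~(4) --- arranging that \emph{every} position can be beaten by more than $n/2$ vertices --- is not achievable in the simplest counterexamples and would send you looking for an unnecessarily elaborate construction: in the paper's example no response to the vertex adjacent to the cut on the long side wins a majority, and nonexistence is instead established by exhibiting, for each candidate profile, a strictly improving deviation (your fallback).

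Your structural intuition also points in a misleading direction. You argue that a construction which ``merely disconnects the path into two roughly equal intervals'' will leave a stable split profile, and hence suggest several deletion times, ``rotating'' reachability patterns, or more vertices/layers. The paper's counterexample is much simpler: $\mathcal P=([9],E_1,E_2)$ where only the single edge $\{3,4\}$ disappears after time~$1$, splitting the path \emph{asymmetrically} into $[1,3]$ and $[4,9]$. The point is that vertices $3$ and $4$ can still cross the cut if they move at time~$1$, so positions near the cut (e.g.\ vertex~$4$, which is reached by $3$ and by all of $[5,9]$) are strong but can be undercut from the long side (e.g.\ by $5$, then by $3$, etc.), producing a best-response cycle; one then checks the handful of remaining cases ($\{p_1,p_2\}\subseteq[3]$, both in $[4,9]$, or one on each side). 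So the missing ingredient is exactly this kind of concrete instance together with its exhaustive verification; as it stands, your writeup establishes nothing beyond the (correct) observation that by \Cref{thm:tree} any counterexample must fail temporal connectivity.
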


\begin{proof}
  Let~$\mathcal P \coloneqq ([9], E_1,E_2)$ with~$E_1\coloneqq \{\{i,i+1\}\mid i\in[8]\}$ and~$E_2\coloneqq E_1\setminus\{\{3,4\}\}$ (\Cref{fig:path}). Clearly, if~$\{p_1,p_2\}\subseteq [3]$, then this is not optimal, since each player can win at most three vertices, whereas choosing vertex~4 yields at least six vertices.
  If~$\{p_1,p_2\}\subseteq [4,9]$, then we can assume without loss of generality that~$4\le p_1< p_2=p_1+1$.
  \begin{compactitem}
    \item If~$p_1= 4$, then~$u_1(p_1,p_2)=2$ and~$u_1(6,p_2)=4$.
    \item If~$p_1= 5$, then~$u_1(p_1,p_2)=3$ and~$u_1(3,p_2)=4$.
    \item If~$p_1\ge 6$, then~$u_2(p_1,p_2)\le 3$ and~$u_2(p_1,3)\ge 4$.
  \end{compactitem}
  Finally, let~$p_1\le 3 < p_2$ (wlog).
  \begin{compactitem}
    \item If~$p_2\le 5$, then~$u_1(p_1,p_2)\le 3$ and~$u_1(6,p_2)=4$.
    \item If~$p_2> 5$ and~$4 \in U_1(p_1,p_2)$, then~$u_2(p_1,p_2)\le 5$ and~$u_2(p_1,4)=6$.
    \item If~$p_2>5$ and~$4 \not\in U_1(p_1,p_2)$, then~$u_1(p_1,p_2)\le 3$ and~$u_1(3,p_2)=4$.
  \end{compactitem}
\end{proof}

\begin{figure}[t]
  \center 
  \begin{tikzpicture}[scale=1]
    \node[vertex] (1) at (0,0) {1};
    \node[vertex] (2) at (1,0) {2};
    \node[vertex] (3) at (2,0) {3};
    \node[vertex] (4) at (3,0) {4};
    \node[vertex] (5) at (4,0) {5};
    \node[vertex] (6) at (5,0) {6};
    \node[vertex] (7) at (6,0) {7};
    \node[vertex] (8) at (7,0) {8};
    \node[vertex] (9) at (8,0) {9};
    \draw[edge] (1) -- (2) -- (3) -- (4) -- (5) -- (6) -- (7) -- (8) -- (9);
    \node at (-1,0) {$t=1$};
    \node[vertex] (1) at (0,-1) {1};
    \node[vertex] (2) at (1,-1) {2};
    \node[vertex] (3) at (2,-1) {3};
    \node[vertex] (4) at (3,-1) {4};
    \node[vertex] (5) at (4,-1) {5};
    \node[vertex] (6) at (5,-1) {6};
    \node[vertex] (7) at (6,-1) {7};
    \node[vertex] (8) at (7,-1) {8};
    \node[vertex] (9) at (8,-1) {9};
    \draw[edge] (1) -- (2) -- (3);
    \draw[edge] (4) -- (5) -- (6) -- (7) -- (8) -- (9);
    \node at (-1,-1) {$t\ge 2$};
  \end{tikzpicture}
  \caption{A monotonically shrinking temporal path without a Nash equilibrium.}
  \label{fig:path}
\end{figure}
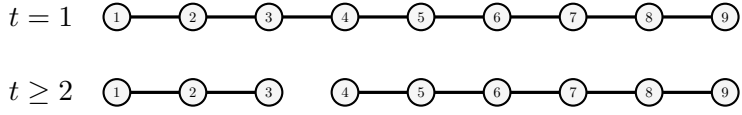

\begin{theorem}\label{thm:cycle-shrink}
  There is a monotonically shrinking cycle $\mathcal C$ such that there is no Nash equilibrium in $\rVor(\mathcal{C},2)$.
\end{theorem}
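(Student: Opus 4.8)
The plan is to construct an explicit monotonically shrinking temporal cycle and then verify by a (symmetry‑reduced) case analysis that every strategy profile admits a profitable deviation, in the same spirit as the proofs of \Cref{thm:path,thm:grid}. Concretely, I would take $\mathcal{C}\coloneqq([10],E_1,E_2)$ with $E_1$ the full $10$‑cycle $1\text{-}2\text{-}\cdots\text{-}10\text{-}1$ and $E_2\coloneqq E_1\setminus\{\{3,4\},\{10,1\}\}$. Then $\mathcal{C}$ is monotonically shrinking and $\mathcal{C}_\downarrow$ is a cycle, and from the second step on $\mathcal{C}$ consists of the two static paths $1\text{-}2\text{-}3$ and $4\text{-}5\text{-}\cdots\text{-}10$ joined, at step $1$ only, by the two deleted edges. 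The design principle, borrowed from \Cref{thm:path}, is that each deleted edge turns one of its endpoints into a ``gateway'': after step $1$ nothing in the long arc except vertex $4$ can reach the short arc, so vertex $4$ (and, via the step‑$1$ edge $\{10,1\}$, vertex $10$) is disproportionately attractive; the short arc is too small to be worth more than that gateway; and — crucially — the long arc has \emph{odd} order, so there is no balanced ``split it in the middle'' profile.

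First I would tabulate the temporal distances in $\mathcal{C}$: within an arc they are ordinary path distances, and across the arcs $\td(u,v)=\infty$ unless $u$ or $v$ lies in $\{1,3,4,10\}$, in which case the single step‑$1$ crossing lets the reach of $u$ spill over a prefix of the other arc (so $\td(3,4)=1,\td(3,5)=2,\dots,\td(3,10)=7$, and symmetrically from $10$ via $\{10,1\}$). Using this I would compute, for every opponent position $q$, a best response and its payoff, exploiting the reflection symmetry $(1\,3)(4\,10)(5\,9)(6\,8)$ of $\mathcal{C}$ (which fixes $2$ and $7$) to halve the work. The expected picture is that the best‑response correspondence has no fixed pair: the best reply to $q\in\{1,3\}$ is a vertex deep in the long arc ($4$, resp.\ $10$), which wins $7$ of the $10$ vertices; the best reply to that is its interior neighbour; iterating, the best‑response dynamics cycle and never reach a stable profile. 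Concretely one checks a short chain of deviations for each of the (up to symmetry) few profiles, recomputing only a handful of the distances above — exactly as in the proof of \Cref{thm:path}.

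The main obstacle is pinning down a construction for which this actually closes, because a cycle, unlike a path, always keeps a ``second route'' between the two sides at step $1$, and this extra route tends to stabilize a balanced profile. For instance, simply re‑closing the $9$‑vertex path of \Cref{thm:path} into a cycle by adding its missing edge in layer $1$ \emph{does} have a Nash equilibrium, with the two players splitting the long arc evenly; and ``cycle minus one edge'' (so that $\mathcal{C}$ becomes a single path plus a step‑$1$ shortcut between its ends) likewise has an equilibrium for every length I checked. So the sizes of the two arcs, and which two edges vanish, must be chosen so that (i) the long arc admits no internal Voronoi equilibrium (hence odd order) and (ii) whenever a player seizes one gateway the opponent can seize the other in reply; I expect the choice above (arcs of orders $3$ and $7$, deleted edges $\{3,4\}$ and $\{10,1\}$) to have no Nash equilibrium, but this has to be verified case by case. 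I would accompany the argument with a two‑layer figure of $\mathcal{C}$ as in \Cref{fig:path,fig:cycle}.
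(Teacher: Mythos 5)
Your construction is exactly the one the paper uses (the $10$-cycle with $\{3,4\}$ and $\{10,1\}$ present only in layer~$1$), and your intended verification strategy -- tabulate temporal distances, exploit the reflection symmetry $(1\,3)(4\,10)(5\,9)(6\,8)$, and show every profile admits an improving deviation -- is the same kind of finite case analysis the paper carries out. So the idea is right, and the check does close for this graph.

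The genuine gap is that the decisive step is only announced, not done: you write that you \emph{expect} this choice of arcs and deleted edges to have no Nash equilibrium ``but this has to be verified case by case,'' and no case is actually verified. For a pure existence-of-counterexample theorem, that verification \emph{is} the proof; without it the proposal establishes nothing, especially since you yourself document that nearby constructions (re-closing the path of \Cref{thm:path}, or a cycle losing only one edge) \emph{do} have equilibria, so correctness really hinges on the unchecked details. Two further points to watch when you do the check. First, your distance description is slightly off: a cross-arc temporal distance $\td(u,v)$ is finite only when the \emph{source} $u$ is a gateway endpoint ($u\in\{1,3,4,10\}$, crossing at step~$1$); it is not enough that $v$ is one (e.g.\ $\td(5,3)=\infty$). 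Second, your sketched dynamics (``best reply to a gateway is its interior neighbour, iterate along the long arc'') underplays the fact that the gateways $4$ and $10$ reach \emph{into} the short arc, which is what makes positions like vertex~$2$ strong (against an opponent at $7$, vertex~$2$ collects $\{1,2,3,4,10\}$, since $4$ and $10$ reach $2$ by step~$2$); several of the improving deviations in the complete analysis, and hence the nonexistence of an equilibrium, depend on this effect rather than on a simple march along the long arc. Also note that exhibiting a best-response cycle is not by itself sufficient -- you must rule out \emph{every} mutually-best-responding pair, so you need the best-response payoff against every position (or a deviation for every profile up to symmetry), as in the paper's proof.
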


\begin{proof}
  Let $\mathcal C\coloneqq ([10],E_1,E_2)$ with $E_1\coloneqq\{\{i,i+1\}\mid i\in[9]\}\cup\{\{10,1\}\}$ and $E_2\coloneqq E_1\setminus\{\{3,4\},\{10,1\}\}$ (see \Cref{fig:cycle_shrink}).
  Clearly, if~$\{p_1,p_2\}\subseteq [3]$, then this is not a Nash equilibrium since each player wins at most four vertices and choosing vertex~4 would yield at least six vertices.
  Now consider the case if~$\{p_1,p_2\}\subseteq [4,10]$. Note that vertex~2 does not reach any of the players. Hence, the remaining graph behaves like a path. Therefore, we can assume that~$p_2=p_1+1$.
  \begin{compactitem}
    \item If~$p_1\ge 8$, then~$u_2(p_1,p_2)\le 3$ and~$u_2(p_1,7)\ge 5$.
    \item If~$p_1 \le 5$, then~$u_1(p_1,p_2)\le 3$ and~$u_1(7,p_2)\ge 5$.
    \item If~$p_1 =6$, then~$u_1(6,7)=4$ and~$u_1(2,7)=5$.
    \item If~$p_1 =7$, then~$u_2(7,8)=4$ and~$u_2(7,2)=5$.
  \end{compactitem}
  Finally, assume~$p_1 \le 3 < p_2 \le 7$ (by symmetry of the cycle).
  \begin{compactitem}
    \item If~$p_2\le 5$, then~$u_1(p_1,p_2)\le 4$ and~$u_1(6,p_2)=6$.
    \item If~$p_2 = 6$, then~$U_2(p_1,6)\subseteq[4,9]$. If~$4\not\in U_2(p_1,6)$, then~$p_1\ge 2$ and thus player~2 can improve with~$p_2=4$ giving~$U_2(p_1,4)=[4,9]$.
      If~$4\in U_2(p_1,6)$, then~$p_1=1$ and player~2 can improve with~$p_2=4$ to~$U_2(1,4)=[3,9]$.
    \item If~$p_2 =7$, then~$u_2(p_1,7)\le 5$ and~$u_2(p_1,4)\ge 6$.
  \end{compactitem}
\end{proof}

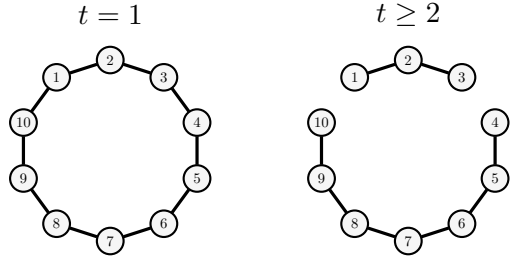
\begin{figure}[t]
  \center 
  \begin{tikzpicture}[scale=1.2]
    \node[vertex] (1) at (126:1) {1};
    \node[vertex] (2) at (90:1) {2};
    \node[vertex] (3) at (54:1) {3};
    \node[vertex] (4) at (18:1) {4};
    \node[vertex] (5) at (342:1) {5};
    \node[vertex] (6) at (306:1) {6};
    \node[vertex] (7) at (270:1) {7};
    \node[vertex] (8) at (234:1) {8};
    \node[vertex] (9) at (198:1) {9};
    \node[vertex] (10) at (162:1) {10};
    \draw[edge] (1)--(2)--(3) -- (4) -- (5) -- (6) -- (7)--(8)--(9)--(10)--(1);
    \node at (90:1.5) {$t=1$};
  \end{tikzpicture}
  \hspace{1cm}
  \begin{tikzpicture}[scale=1.2]
    \node[vertex] (1) at (126:1) {1};
    \node[vertex] (2) at (90:1) {2};
    \node[vertex] (3) at (54:1) {3};
    \node[vertex] (4) at (18:1) {4};
    \node[vertex] (5) at (342:1) {5};
    \node[vertex] (6) at (306:1) {6};
    \node[vertex] (7) at (270:1) {7};
    \node[vertex] (8) at (234:1) {8};
    \node[vertex] (9) at (198:1) {9};
    \node[vertex] (10) at (162:1) {10};
    \draw[edge] (1)--(2)--(3);
    \draw[edge] (4) -- (5) -- (6) -- (7)--(8)--(9)--(10);
    \node at (90:1.5) {$t\ge 2$};
  \end{tikzpicture}
  \caption{A monotonically shrinking temporal cycle without a Nash equilibrium.}
  \label{fig:cycle_shrink}
\end{figure}
\noindent
Notably, for temporal paths already one disappearing edge is enough to exclude a Nash equilibrium while the counterexample for cycles has two disappearing edges.
In fact, one can show that for cycles a Nash equilibrium always exists if at most one edge disappears.

It remains to exclude Nash equilibria for monotonically shrinking split graphs.
\begin{theorem}\label{thm:split}
  There exists a monotonically shrinking split graph $\mathcal S$ such that there is no Nash equilibrium in $\rVor(\mathcal{S},2)$.
\end{theorem}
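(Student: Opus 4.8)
The plan is to exhibit one explicit monotonically shrinking split graph $\mathcal S=(V,(E_t)_{t=1}^{\tau})$ on a constant number of vertices (and only a handful of layers) for which every strategy profile admits a profitable deviation. The only structural constraint is that the \emph{first} layer induce a split graph: since $\mathcal S$ is monotonically shrinking we have $\mathcal S_\downarrow=(V,E_1)$, so $\mathcal S$ is a temporal split graph iff $(V,E_1)$ is split, while the later layers need not be split. I would exploit this by taking a clique $K$ together with several pendant-type independent vertices attached to various clique vertices only in the first layer(s), and choosing which \emph{clique} edges survive after layer~$1$ so that from layer~$2$ on the graph becomes ``path-like'' or ``cycle-like'' — this is the feature that separates general split graphs from threshold and complete $k$-partite graphs, where a dominating vertex forces an equilibrium.

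Once $\mathcal S$ is fixed, the proof is a finite verification in the style of \Cref{thm:path,thm:cycle-shrink,thm:grid}. First I would tabulate $\td(u,v)$ for all ordered pairs; since edges only disappear, each such distance amounts to one (or two) initial steps in $E_1$ (resp.\ $E_2$) into the eventually-static graph $(V,E_\tau)$ followed by an ordinary BFS there, so all distances are small and routine to compute. Then, using the symmetries of $\mathcal S$ to reduce the number of cases, I would go through all profiles $(p_1,p_2)$ — both players on clique vertices, one on a clique vertex and one on an independent vertex, and both on independent vertices — and in each case name a vertex to which one of the players strictly improves. The profiles in which some player sits at an independent vertex are easy to dispatch, because such a vertex is reached by very few others (it has no outgoing foremost walks once its first-layer edges vanish), so that player wins almost nothing and improves by moving onto $K$; the real work is in the profiles with both players on clique vertices.

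I expect the main obstacle to be designing the construction rather than checking it. In a split graph the clique vertices are pairwise adjacent and, in the first (most connected) layer, tend to be reached quickly by \emph{all} vertices, which makes occupying a clique vertex a ``safe'' response and frequently part of a Nash equilibrium. The construction must therefore ensure (i) that no clique vertex is dominant — which clique vertex is best to occupy has to depend on the opponent's choice in a non-transitive, rock--paper--scissors-like way — and (ii) that no independent vertex is a stable position either. Achieving both at once while keeping $E_1$ a split graph is delicate: overly symmetric constructions tend to produce a balanced, stable profile with two clique vertices, so some carefully chosen asymmetry (e.g.\ in the number of pendants hanging off different clique vertices, or in which clique edges are deleted first, possibly using a third layer) is needed to make every profile unstable.
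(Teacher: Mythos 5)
There is a genuine gap: your proposal never exhibits the temporal graph whose existence the theorem asserts. Everything you write is a (sensible) search strategy --- restrict the split condition to the first layer, make the surviving layers path-like, break symmetry so that best responses cycle, then finish by a finite case check --- and you yourself identify the construction as ``the main obstacle'' and leave it open, noting that overly symmetric attempts collapse into a stable two-clique-vertex profile. Since the statement is an existence claim, the concrete counterexample \emph{is} the proof; without it, no part of the verification (distance tables, case analysis, symmetry reductions) can even be started, so the argument as given establishes nothing. Your structural observation that only $\mathcal S_\downarrow=(V,E_1)$ must be split (so later layers may be arbitrary subgraphs of $E_1$) is correct under the paper's definition and is indeed the right degree of freedom to exploit, but it is a license to construct, not a construction.

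For comparison, the paper's proof does exactly what you outline, with an explicit $8$-vertex graph: clique $C=[4,7]$, independent set $I=\{1,2,3,8\}$, first layer $\binom{C}{2}\cup\{\{1,4\},\{2,4\},\{2,5\},\{3,5\},\{6,8\},\{7,8\}\}$, and second layer only $\{\{2,4\},\{2,5\},\{4,6\},\{5,7\}\}$, i.e.\ from time $2$ on the graph is the path $6-4-2-5-7$. This realizes both of your desiderata (no dominant clique vertex, no stable independent vertex) via the asymmetric pendant attachment ($1,3$ pendant, $2$ and $8$ attached to two clique vertices each) and the path-like residue, and the non-existence of a Nash equilibrium is then verified by a short case analysis over $p_1\in\{1,2,4,6,8\}$ using the symmetry of the construction, in the same style as \Cref{thm:path,thm:cycle-shrink}. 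To repair your proposal you would need to supply such a graph and carry out the finite verification; the plan alone does not suffice.
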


\begin{proof}
  Let~$\mathcal S\coloneqq (V=C\cup I, E_1,E_2)$ with~$C\coloneqq[4,7]$, $I\coloneqq\{1,2,3,8\}$, and
  \begin{align*}
    E_1&\coloneqq\binom{C}{2}\cup\{\{1,4\},\{2,4\},\{2,5\},\{3,5\},\{6,8\},\{7,8\}\},\\
    E_2&\coloneqq\{\{2,4\},\{2,5\},\{4,6\},\{5,7\}\}.
  \end{align*}
  \Cref{fig:split} shows~$\mathcal S$. By symmetry of~$\mathcal S$, let~$p_1\in\{1,2,4,6,8\}$.
  \begin{compactitem}
    \item If~$p_1=1$, then clearly~$p_2=4$ is the best response. But player~1's best response is then~$p_1=7$ which yields~$U_1(7,4)=\{3,7,8\}$.
    \item If~$p_1=2$, then we can assume~$p_2\in\{1,4,6,8\}$ by symmetry. The best response is~$p_2=4$, where~$U_1(2,4)=\{2,3\}$. Again, player~1 can improve with~$p_1=7$.
    \item If~$p_1=4$, then~$p_2=7$ is the best response with~$U_1(4,7)=\{1,2,4\}$. But then $p_1=5$ yields~$U_1(5,7)=\{1,2,3,5\}$.
    \item If~$p_1=6$, then~$p_2=4$ is the best response with~$U_1(6,4)=\{6,8\}$. Again, player~1 improves with~$p_1=7$.
    \item If~$p_1=8$, then~$p_2=6$ is the best response (up to symmetry). Player~1 can improve with~$p_1=5$ which yields~$U_1(5,6)=\{2,3,5\}$.
  \end{compactitem}
\end{proof}

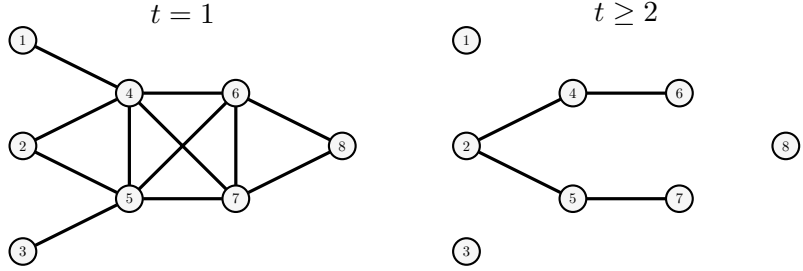
\begin{figure}[t]
  \center 
  \begin{tikzpicture}[scale=.7]
    \node[vertex] (1) at (-2,3) {1};
    \node[vertex] (2) at (-2,1) {2};
    \node[vertex] (3) at (-2,-1) {3};
    \node[vertex] (4) at (0,2) {4};
    \node[vertex] (5) at (0,0) {5};
    \node[vertex] (6) at (2,2) {6};
    \node[vertex] (7) at (2,0) {7};
    \node[vertex] (8) at (4,1) {8};
    \draw[edge] (4) -- (2) -- (5) -- (3);
    \draw[edge] (1) -- (4) -- (5) -- (6) -- (7)--(8) -- (6) -- (4) -- (7) -- (5);
    \node at (1,3.5) {$t=1$};
  \end{tikzpicture}
  \hspace{1cm}
  \begin{tikzpicture}[scale=.7]
    \node[vertex] (1) at (-2,3) {1};
    \node[vertex] (2) at (-2,1) {2};
    \node[vertex] (3) at (-2,-1) {3};
    \node[vertex] (4) at (0,2) {4};
    \node[vertex] (5) at (0,0) {5};
    \node[vertex] (6) at (2,2) {6};
    \node[vertex] (7) at (2,0) {7};
    \node[vertex] (8) at (4,1) {8};
    \draw[edge] (6) -- (4) -- (2) -- (5)-- (7);
    \node at (1,3.5) {$t\ge 2$};
  \end{tikzpicture}
  \caption{A monotonically shrinking temporal split graph without a Nash equilibrium.}
  \label{fig:split}
\end{figure}
\noindent
Note that the classic temporal Voronoi game always has a Nash equilibrium on monotonically shrinking split graphs (as we show in \Cref{thm:Vor-split}). For example, (4,5) is a Nash equilibrium for the temporal split graph in the proof of \Cref{thm:split}.

Contrasting \Cref{thm:split}, we finish this section with a positive result for threshold graphs (and thus also cliques) and complete $k$-partite graphs.

\begin{theorem}\label{thm:k-part-shrink}
  There exists a Nash equilibrium in $\rVor(\mathcal{G},2)$ if $\Gg$ is a monotonically shrinking
  \begin{compactenum}[(1)]
    \item\label{kpart} complete $k$-partite graph with $k\ge 1$ or
    \item\label{thresh} threshold graph.
  \end{compactenum}
\end{theorem}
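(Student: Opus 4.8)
The plan is to handle the two cases separately, in each case writing down an explicit strategy profile and checking that no deviation helps. In both cases the first step is to determine the temporal distances, which in a monotonically shrinking graph are dictated almost entirely by the first layer $G_1 = \mathcal{G}_\downarrow$.

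\emph{Complete $k$-partite graphs.} First I would record that $\td(u,v) = 1$ whenever $u$ and $v$ lie in different parts (the edge $\{u,v\}$ is present in $E_1$), while for $u \ne v$ in a common part $A$ one has $\td(u,v) = 2$ if $v$ is non-isolated in the second layer $G_2$ and $\td(u,v) = \infty$ otherwise, because every temporal walk entering $v$ from inside $A$ must use an edge at a time $\ge 2$. Granting this, I claim that every profile $(p_1, p_2)$ with $p_1 \in A_a$ and $p_2 \in A_b$ for distinct parts $A_a \ne A_b$ is a Nash equilibrium: a short computation shows $U_1(p_1, p_2) = \{p_1\} \cup (A_b \setminus \{p_2\})$ and, symmetrically, $U_2(p_1,p_2) = \{p_2\} \cup (A_a \setminus \{p_1\})$, so the payoffs are $|A_b|$ and $|A_a|$. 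The equilibrium property then follows from the observation that, against a fixed $p_2 \in A_b$, player~1 wins exactly $|A_b|$ vertices from any position outside $A_b$ and at most $|A_b| - 1$ vertices from any position inside $A_b$ (and $0$ if she moves onto $p_2$), and symmetrically for player~2. The case $k = 1$ (an edgeless graph, where any profile with $p_1 \ne p_2$ is trivially an equilibrium) and the possibility of empty parts are dealt with separately.

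\emph{Threshold graphs.} Let $v^*$ be a vertex dominating all non-isolated vertices (which exists by the defining iterative construction), write $D$ for the set of non-isolated vertices of $G_1$ and $Z = V \setminus D$. The relevant distance facts are that $\td(v, v^*) = 1$ for every $v \in D \setminus \{v^*\}$ (via the edge $\{v, v^*\} \in E_1$), while $\td(v, v^*) = \infty$ for $v \in Z$ since an isolated vertex of $\mathcal{G}_\downarrow$ has no incident edge at any time. I would then argue that $(v^*, w)$ is a Nash equilibrium for an arbitrary $w \ne v^*$. Player~2's side is immediate: against $v^*$, the winning set $U_2(v^*, q)$ can contain only $q$ itself, because to beat the distance $\td(v, v^*) \le 1$ one needs $\td(v, q) = 0$; hence every $q \ne v^*$ is a best response, with payoff $1$. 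The heart of the argument is that $v^*$ is a best response for player~1 against any fixed $w \ne v^*$. For this I would prove the containment $U_1(p_1', w) \setminus \{p_1'\} \subseteq U_1(v^*, w)$ for every $p_1'$: any $v \ne p_1'$ won by $p_1'$ satisfies $\td(v, p_1') \ge 1$ and hence $\td(v, w) \ge 2$, whence $\td(v, v^*) \in \{0, 1\} < \td(v, w)$ for $v \in D$ (and no vertex of $Z$ is ever won by anyone other than itself). It then remains to rule out that the single possibly-extra vertex $p_1'$ makes $u_1(p_1', w)$ exceed $u_1(v^*, w)$; this can only occur when $p_1' \in D \setminus \{v^*\}$ is adjacent to $w$ in $E_1$, in which case $w \in D$ and $v^*$ is adjacent to both $p_1'$ and $w$ in $E_1$, so $v^* \in U_1(v^*, w) \setminus U_1(p_1', w)$ --- that is, $p_1'$ and $v^*$ simply trade places and the payoff is unchanged. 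Combining the two sides yields the equilibrium.

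I expect the threshold case --- specifically the verification that $v^*$ is player~1's best response --- to be the main obstacle, because one must treat several boundary situations ($p_1' = v^*$, $p_1' \in Z$, $p_1'$ adjacent to $w$ in $G_1$ or not, $w \in D$ versus $w \in Z$) with enough care that the ``containment plus a single swap'' count stays tight. The complete $k$-partite case should be routine once the distance characterization is established, the only mild subtlety being the behaviour of same-part pairs when the target vertex becomes isolated in a later layer.
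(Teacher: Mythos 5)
Your proposal is correct and follows essentially the same route as the paper: for the complete $k$-partite case a profile with the two players in distinct parts, and for the threshold case the pair (dominating vertex, arbitrary other vertex), verified via the observation that all non-isolated vertices reach the dominating vertex by time~1. Your write-up merely spells out in more detail (containment plus single-swap accounting) what the paper's proof states tersely.
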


\begin{proof}
  (\ref{kpart}) Let~$\Gg=(V_1\cup \cdots \cup V_k, (E_t)_{t=1}^{\infty})$ be a monotonically shrinking temporal complete $k$-partite graph with~$k\ge 2$ (the case~$k=1$ is trivial).
  Let~$p_1\in V_1$ and~$p_2\in V_2$. Then, $(p_1,p_2)$ is a Nash equilibrium. Note that~$u_1(p,p_2)=|V_2|$ for all $p\in V\setminus V_2$ and $u_1(p,p_2)\le 1$ if~$p\in V_2$ (and symmetrically for~$u_2(p_1,p)=|V_1|$ if~$p\in V\setminus V_1$ and~$u_2(p_1,p)\le 1$ if~$p\in V_1$) since~$G_1$ is complete~$k$-partite. Hence, no player can improve. 

  (\ref{thresh}) Let~$\Gg=(V, (E_t)_{t=1}^{\infty})$ be a monotonically shrinking temporal threshold graph with~$|V|\ge 2$ (the case $|V|=1$ is trivial). If all vertices are isolated in~$G_1$, then a Nash equilibrium trivially exists. Otherwise, there exists a vertex~$v$ which dominates all non-isolated vertices in~$G_1$. Then, $(v,w)$ with~$w\neq v$ is a Nash equilibrium.
  Clearly, player~1 cannot improve since all non-isolated vertices already reach~$v$ no later than time step~1. Hence, also player~2 cannot improve.
\end{proof}

\section{Temporal Voronoi Games ($\Vor$)}\label{sec:Vor}

We complement the results for the reverse temporal Voronoi game from \Cref{sec:rVor} with
the missing results for the remaining graph classes for the classic temporal Voronoi game.

\subsection{Monotonically Growing Graphs}

We first show that for grids a Nash equilibrium is also not guaranteed.

\begin{theorem}\label{thm:Vor-grid}
  There exists a monotonically growing grid~$\Gg$ such that there is no Nash equilibrium in~$\Vor(\Gg,2)$.
\end{theorem}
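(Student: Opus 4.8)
The plan is to exhibit an explicit monotonically growing grid on few vertices (I would start with a $2\times 3$ grid on vertex set $[6]$, and if that does not suffice fall back to a $2\times 4$ grid on $[8]$ or a $3\times 3$ grid on $[9]$) equipped with a carefully chosen two- or three-layer edge schedule, and then verify by a finite case distinction that every strategy profile is unstable. This mirrors the structure of the proofs of \Cref{thm:grid}, \Cref{thm:cycle-grow}, etc.: pick a concrete $\Gg$, reduce to a handful of representative positions via symmetry, and for each show that some player has a profitable deviation.

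Concretely, I would first fix the symmetry group of the chosen grid (a $2\times 3$ grid has a Klein four-group generated by the horizontal and vertical flips) and use it to cut the positions $p_1$ that must be examined down to a small set of representatives — exactly as in \Cref{thm:grid}, where only $p_1\in\{1,2,4\}$ needed checking. Then, for each representative position of player~1, I would compute all relevant temporal distances $\td(p_i,v)$ layer by layer, determine a best response $p_2$ of player~2, read off the won sets $U_1(p_1,p_2)$ and $U_2(p_1,p_2)$, and exhibit a strictly improving deviation for one of the players. To conclude that $\Vor(\Gg,2)$ has no Nash equilibrium, I would chase these deviations and verify they close up into a cycle with no fixed point, so that starting from \emph{any} profile some player can always improve; in particular no profile is a mutual best response.

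The crux is designing the edge schedule, and here the classic game is genuinely different from the reverse game: the authors note that the reverse-game counterexample grid of \Cref{thm:grid} actually \emph{does} admit the classic Nash equilibrium $(1,6)$, so a new gadget is needed. The key feature to engineer is the ``catch-up'' dynamics of the classic game — a player at an apparently dominant vertex can still be undercut because a late-appearing edge lets the opponent reach an entire block of far vertices almost simultaneously. Accordingly I would make the first layer sparse (a disjoint union of short paths, so an early position only controls its own path) and let subsequent layers add the connecting grid edges at staggered times, so that whatever vertex player~1 picks, player~2 can pick a vertex that floods a majority of the grid via the freshly appeared edges, and then player~1 can re-deviate symmetrically. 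The main obstacle I anticipate is tuning the appearance times so that this flooding advantage holds simultaneously for every symmetry representative of $p_1$; this may force a third layer or a slightly larger grid, but once the schedule is right the verification is only a routine (if tedious) finite check of temporal distances.
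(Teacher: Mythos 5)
Your proposal has the right outer shape (pick a concrete monotonically growing grid, cut down player~1's positions by symmetry, compute best responses, and show the best-response correspondence has no mutual fixed pair), which is exactly how the paper argues. But the theorem is an existence statement, and its entire mathematical content is the explicit counterexample plus the finite verification --- and that is precisely what your proposal defers. You describe design heuristics (sparse first layer made of short paths, later layers adding grid edges at staggered times so the opponent can ``flood'' a majority) and candidate sizes ($2\times 3$, $2\times 4$, $3\times 3$), but you never fix a schedule, never compute a single temporal distance, and explicitly hedge with ``once the schedule is right the verification is only a routine check.'' Until a specific $\Gg$ is written down and its best-response digraph is checked to contain no 2-cycle, nothing has been proved; any particular schedule you might write down could perfectly well admit an equilibrium, and the catch-up dynamics of the classic game tend to stabilize small examples --- indeed the paper itself observes that the $2\times 3$ gadget from \Cref{thm:grid} \emph{does} have the classic equilibrium $(1,6)$, and the authors end up needing a $(3\times 4)$-grid on $12$ vertices.

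For comparison, the paper's construction is very simple and close to your heuristic, but concrete: the first layer consists solely of the middle-column path $2$--$6$--$10$ of a $(3\times 4)$-grid, and from time $2$ on the full grid is present. The proof then lists, for every representative $p_1$, the best response(s) of player~2 with their payoffs, and observes that the resulting best-response digraph only runs into cycles such as $6\to 8\to 7\to 6$, so no profile is a pair of mutual best responses. Your logical step ``no fixed point of the deviation dynamics, hence no Nash equilibrium'' is sound in this form, but without the concrete temporal grid and the accompanying case table the proposal is a research plan rather than a proof.
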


\begin{proof}
  Consider the $(3\times4)$-grid~$\Gg$ with vertex set~$[12]$ given in \Cref{fig:Vor-grid}~(left). To see that there is no Nash equilibrium, we consider the best responses (\Cref{fig:Vor-grid}~(right)) which are straightforward to verify.
  \begin{compactitem}
    \item For~$p_1=1$, the best response is~$p_2=6$ with~$u_2(1,6)=10$.
    \item For~$p_1=2$, the best response is~$p_2=6$ with~$u_2(2,6)=5$.
    \item For~$p_1=3$, a best response is~$p_2\in\{6,7\}$ with~$u_2(3,p_2)=8$.
    \item For~$p_1=4$, the best response is~$p_2=3$ with~$u_2(4,3)=9$.
    \item For~$p_1=5$, a best response is~$p_2\in\{2,6,10\}$ with~$u_2(5,p_2)=9$.
    \item For~$p_1=6$, the best response is~$p_2=8$ with~$u_2(6,8)=3$.
    \item For~$p_1=7$, a best response is~$p_2\in\{2,6,10\}$ with~$u_2(7,p_2)=6$.
    \item For~$p_1=8$, the best response is~$p_2=7$ with~$u_2(8,7)=9$.
    \item The case~$p_1\in\{9,10,11,12\}$ is symmetric to~$p_1\in\{1,2,3,4\}$.
  \end{compactitem}
  Note that the above best responses always run into a cycle $6\to 8\to 7\to 6$ or $6\to 8\to 7\to 2 (10) \to 6$. Hence, there exists no Nash equilibrium.
\end{proof}

\begin{figure}[t]
  \center 
  \begin{tikzpicture}[scale=1]
    \node[vertex] (1) at (0,1) {1};
    \node[vertex] (2) at (1,1) {2};
    \node[vertex] (3) at (2,1) {3};
    \node[vertex] (4) at (3,1) {4};
    \node[vertex] (5) at (0,0) {5};
    \node[vertex] (6) at (1,0) {6};
    \node[vertex] (7) at (2,0) {7};
    \node[vertex] (8) at (3,0) {8};
    \node[vertex] (9) at (0,-1) {9};
    \node[vertex] (10) at (1,-1) {10};
    \node[vertex] (11) at (2,-1) {11};
    \node[vertex] (12) at (3,-1) {12};
    \draw[edge] (2) -- (6) -- (10);
    \node at (1.5,1.5) {$t=1$};
  \end{tikzpicture}
  \hspace{1cm}
  \begin{tikzpicture}[scale=1]
    \node[vertex] (1) at (0,1) {1};
    \node[vertex] (2) at (1,1) {2};
    \node[vertex] (3) at (2,1) {3};
    \node[vertex] (4) at (3,1) {4};
    \node[vertex] (5) at (0,0) {5};
    \node[vertex] (6) at (1,0) {6};
    \node[vertex] (7) at (2,0) {7};
    \node[vertex] (8) at (3,0) {8};
    \node[vertex] (9) at (0,-1) {9};
    \node[vertex] (10) at (1,-1) {10};
    \node[vertex] (11) at (2,-1) {11};
    \node[vertex] (12) at (3,-1) {12};
    \draw[edge] (1) -- (2) -- (3) -- (4) -- (8) -- (7) -- (6)-- (5)-- (9)--(10)--(11)--(12);
    \draw[edge] (1) -- (5);
    \draw[edge] (2) -- (6);
    \draw[edge] (3) -- (7);
    \draw[edge] (6) -- (10);
    \draw[edge] (7) -- (11);
    \draw[edge] (8) -- (12);
    \node at (1.5,1.5) {$t\ge 2$};
  \end{tikzpicture}
  \hspace{3cm}
  \begin{tikzpicture}[scale=1]
    \node[vertex] (1) at (0,1) {1};
    \node[vertex] (2) at (1,1) {2};
    \node[vertex] (3) at (2,1) {3};
    \node[vertex] (4) at (3,1) {4};
    \node[vertex] (5) at (0,0) {5};
    \node[vertex] (6) at (1,0) {6};
    \node[vertex] (7) at (2,0) {7};
    \node[vertex] (8) at (3,0) {8};
    \node[vertex] (9) at (0,-1) {9};
    \node[vertex] (10) at (1,-1) {10};
    \node[vertex] (11) at (2,-1) {11};
    \node[vertex] (12) at (3,-1) {12};
    \draw[->,thick] (1)-- (6);
    \draw[->,thick] (2)-- (6);
    \draw[->,thick] (3)-- (6);
    \draw[->,thick] (3)-- (7);
    \draw[->,thick] (4)-- (3);
    \draw[->,thick] (5)-- (6);
    \draw[->,thick] (5)-- (2);
    \draw[->,thick] (5)-- (10);
    \draw[->,thick] (6) to [bend right] (8);
    \draw[->,thick] (7)-- (6);
    \draw[->,thick] (7)-- (2);
    \draw[->,thick] (7)-- (10);
    \draw[->,thick] (8)-- (7);
    \draw[->,thick] (9)-- (6);
    \draw[->,thick] (10) -- (6);
    \draw[->,thick] (11)-- (6);
    \draw[->,thick] (11)-- (7);
    \draw[->,thick] (12)-- (11);
  \end{tikzpicture}
  \caption{(Left) A monotonically growing temporal grid~$\Gg$ without a Nash equilibrium in~$\Vor(\Gg,2)$. (Right) The best response graph of~$\Vor(\Gg,2)$.}
  \label{fig:Vor-grid}
\end{figure}
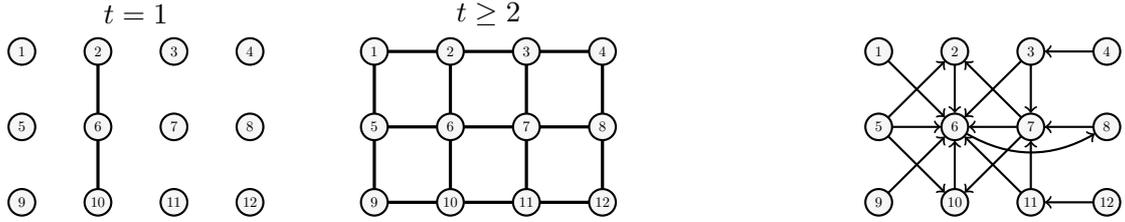

For monotonically growing cliques (and also threshold and split graphs) and complete~$k$-partite graphs, the same arguments as in \Cref{thm:clique-grow,thm:k-part-grow} for the reverse Voronoi game apply.
Hence, from \Cref{thm:Vor-grid}, we also obtain the following.

\begin{corollary}\label{cor:Vor-clique}
  There exists a monotonically growing clique~$\mathcal C$ and a monotonically growing complete~$k$-partite graph~$\mathcal K$ for each~$k\ge 2$ such that~$\Vor(\mathcal C,2)$ and~$\Vor(\mathcal K, 2)$ have no Nash equilibrium.
\end{corollary}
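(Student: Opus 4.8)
The plan is to reuse the ``completion'' trick of \Cref{thm:clique-grow} and \Cref{thm:k-part-grow} essentially verbatim, but starting from the monotonically growing grid $\Gg$ of \Cref{thm:Vor-grid} (for which $\Vor(\Gg,2)$ has no Nash equilibrium) instead of from a reverse-game counterexample. The crucial observation is that those constructions only ever add edges \emph{after} every pairwise temporal distance of the base graph has already been realized, so they leave the entire temporal distance function untouched; in particular they are insensitive to whether a player's payoff is computed from $\td(p_i,v)$ (classic game) or from $\td(v,p_i)$ (reverse game), and the counterexample carries over to $\Vor$.

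For the clique, since $\Gg$ is monotonically growing with connected underlying graph it is temporally connected, so there is a $d\in\N$ with $\td(u,v)\le d$ for all vertices $u,v$. I would put $\mathcal{C}\coloneqq([12],E'_1,\dots,E'_{d+1})$ with $E'_t\coloneqq E_t$ for $t\le d$ and $E'_{d+1}\coloneqq\binom{[12]}{2}$. This is a monotonically growing temporal clique, and the new edges, appearing only at time $d+1>d$, cannot create a faster walk between any two of the original vertices, so no pairwise temporal distance changes. Hence $\Vor(\mathcal{C},2)$ has exactly the same payoff function as $\Vor(\Gg,2)$ and inherits the absence of a Nash equilibrium; the same $\mathcal{C}$ also witnesses the threshold and split cases.

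For complete $k$-partite graphs I would follow \Cref{thm:k-part-grow}: the $(3\times 4)$-grid underlying $\Gg$ is bipartite with colour classes $A$ and $B$, so I add $k-2$ fresh vertices (each a singleton part) and, in the final layer $E'_{d+1}$, insert all missing $A$--$B$ edges together with all edges incident to the fresh vertices, making every fresh vertex adjacent at time $d+1$ to every vertex of smaller index. The resulting $\mathcal{K}$ is a monotonically growing complete $k$-partite graph whose temporal distances among the $12$ original vertices equal those in $\Gg$. To rule out Nash equilibria I distinguish cases: if both players sit in $[12]$, then every fresh vertex has temporal distance exactly $d+1$ to and from either player position and is won by nobody, so the outcome equals that of $\Vor(\Gg,2)$, which has no equilibrium; and if some player sits on a fresh vertex she wins only that single vertex, whereas relocating into the grid strictly improves her payoff (the best-response payoffs recorded in the proof of \Cref{thm:Vor-grid} are all at least $3$, and against an opponent stranded on a fresh vertex she wins almost everything).

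I do not expect a genuine obstacle; this corollary is in essence a transfer lemma. The one point that needs mild care is checking, in the $k$-partite construction, that the fresh vertices are tied between the two players for \emph{every} placement of the players inside $[12]$ --- this holds because each original vertex reaches each fresh vertex in a single step at time $d+1$, so all of these distances equal $d+1$ irrespective of the players' positions. The remaining verifications (that a late full layer cannot shorten a distance already bounded by $d$, and the ``wins only one vertex'' bookkeeping for a player on a fresh vertex) are routine.
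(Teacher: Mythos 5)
Your proposal is correct and matches the paper's intended argument: the paper proves \Cref{cor:Vor-clique} exactly by applying the completion trick of \Cref{thm:clique-grow} and \Cref{thm:k-part-grow} to the grid counterexample of \Cref{thm:Vor-grid}, which is what you do. Your added verifications (late edges cannot shorten distances already bounded by~$d$, fresh vertices are tied between both players, and a player on a fresh vertex wins at most one vertex and can improve by moving into the grid) are the same routine checks the paper leaves implicit.
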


\subsection{Monotonically Shrinking Graphs}

For monotonically shrinking graphs, the following theorem is easily obtained from analogous arguments as for \Cref{thm:k-part-shrink}. Hence, we omit a formal proof.
\begin{theorem}\label{thm:Vor-k-partit}
  For every monotonically shrinking complete~$k$-partite graph~$\mathcal K$ with~$k\ge 1$, there exists a Nash equilibrium in~$\Vor(\mathcal K, 2)$.
\end{theorem}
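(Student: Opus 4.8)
The plan is to follow exactly the recipe used in the proof of \Cref{thm:k-part-shrink} for the reverse game on complete $k$-partite graphs: exhibit one explicit strategy profile and verify directly that no player wants to deviate. The point that makes this work is that on a \emph{monotonically shrinking} complete $k$-partite graph the first layer already equals the underlying graph $\mathcal K_\downarrow$ and is itself complete $k$-partite, so the outcome is essentially decided in step~$1$.

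Concretely, write $\mathcal K = (V_1\cup\dots\cup V_k,(E_t)_{t=1}^\infty)$. The case $k=1$ (i.e.\ $G_1$ edgeless, and hence every layer edgeless) is trivial: any profile $(p_1,p_2)$ with $p_1\neq p_2$ is a Nash equilibrium, since each player only ever reaches her own position. So assume $k\ge 2$ and take $p_1\in V_1$, $p_2\in V_2$. The three distance facts I would record, all immediate from $G_1$ being complete $k$-partite and from edges only disappearing, are: $\td(u,v)=1$ whenever $u$ and $v$ lie in distinct parts; $\td(u,v)\ge 2$ whenever $u\neq v$ lie in the same part (a walk between them has length at least $2$, so arrival time at least $2$, possibly $\infty$); and $\td(u,u)=0$. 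Using these, one checks that under $(p_1,p_2)$ player~$1$ wins exactly $\{p_1\}\cup(V_2\setminus\{p_2\})$ — she reaches all of $V_2\setminus\{p_2\}$ at step~$1$ while player~$2$ needs at least two steps for those, every vertex of $V_1\setminus\{p_1\}$ is lost to player~$2$ symmetrically, and every vertex outside $V_1\cup V_2$ is reached at step~$1$ by both and hence tied — so $u_1(p_1,p_2)=|V_2|$ and, symmetrically, $u_2(p_1,p_2)=|V_1|$.

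It remains to rule out profitable deviations; I argue for player~$1$, player~$2$ being symmetric. If player~$1$ moves to some $p'\in V_\ell$ with $\ell\neq 2$, the very same computation with $p'$ in place of $p_1$ shows she now wins $\{p'\}\cup(V_2\setminus\{p_2\})$, again $|V_2|$ vertices — no improvement. If instead $p'\in V_2$, then every vertex of $V\setminus V_2$ is reached at step~$1$ by both players and is therefore not won by player~$1$, while inside $V_2$ she loses $p_2$ outright, so $u_1\le|V_2|-1$ (and $p'=p_2$ gives $u_1=0$) — again no improvement. Hence $(p_1,p_2)$ is a Nash equilibrium.

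The whole argument is essentially bookkeeping; the only step that needs a little care is the deviation analysis, in particular confirming that stepping into a third part $V_\ell$ ($\ell\notin\{1,2\}$) or into the opponent's part $V_2$ never helps, which is why I would single it out. The reason there is no real obstacle here — in contrast to the grid and cycle cases — is precisely that the monotonic shrinking of a complete $k$-partite graph prevents any of the ``catch-up'' behaviour: all relevant temporal distances are determined at step~$1$.
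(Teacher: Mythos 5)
Your proof is correct and is essentially the argument the paper intends: it omits a formal proof precisely because the claim follows by the same reasoning as \Cref{thm:k-part-shrink}, namely taking $p_1\in V_1$, $p_2\in V_2$, noting all relevant temporal distances are fixed by the first layer, and checking deviations into a third part or the opponent's part never help. Your deviation analysis (payoff $|V_2|$ outside $V_2$, at most $|V_2|-1$ inside $V_2$) is exactly that adaptation to the non-reverse distances, so there is nothing to add.
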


Finally, for the temporal Voronoi game, a Nash equilibrium is guaranteed even
for monotonically shrinking split graphs (as opposed to the reverse Voronoi game).
Notably, this case is analogous to the Diffusion game on static split graphs~\cite{FHKO22}.

\begin{theorem}\label{thm:Vor-split}
  For every monotonically shrinking split graph~$\mathcal S$, there exists a Nash equilibrium in~$\Vor(\mathcal S, 2)$.
\end{theorem}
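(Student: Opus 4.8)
The plan is to reduce the temporal game on $\mathcal S$ to a purely static ``covering game'' on the clique of its first layer. Since $\mathcal S$ is shrinking, its underlying graph equals its first layer $G_1=(V,E_1)$; fix a partition $V=C\cup I$ with $C$ a clique and $I$ an independent set of $G_1$, and for $c\in C$ write $N_I(c)$ for its set of neighbours in $I$ (with respect to $E_1$). Two special cases are easy. If $G_1$ has no edges, then all temporal distances between distinct vertices are infinite and any profile with distinct positions (each player winning exactly her own vertex) is a Nash equilibrium; the case $|V|=1$ is trivial. If $|C|=1$, say $C=\{c\}$ with $N_I(c)\neq\emptyset$, then I would check directly that $(c,\ell)$ for any $\ell\in N_I(c)$ is a Nash equilibrium: player~$1$ wins $\{c\}\cup(N_I(c)\setminus\{\ell\})$, player~$2$ wins $\{\ell\}$, and every deviation of either player yields at most one vertex for the deviating player. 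From now on assume $|C|\ge 2$.

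The first main step is a localisation lemma: \emph{if $p_2\in C$, then player~$1$ has a best response inside $C$} (and symmetrically with the roles exchanged). The key fact behind it is that for $u\in I$ and any neighbour $c\in N_I(u)$ one has $\td(c,v)\le\td(u,v)$ for every $v\neq u$ while $\td(c,u)=1$: a foremost temporal walk leaving $u$ must first move to a vertex of $C$ (since $I$ is independent), and that vertex is reached by $c$ already at time~$1$ (as $C$ is a clique in $E_1$), so $c$ can imitate the remainder of the walk. Comparing the sets of vertices won from $c$ and from $u$ against the fixed $p_2$, the only vertex $c$ can possibly fail to win relative to $u$ is $u$ itself, which happens precisely when $u\in N_I(p_2)$; in that case, however, $c$ wins the vertex $c$, which $u$ does not win (both $\td(u,c)$ and $\td(p_2,c)$ equal $1$). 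Hence $u_1(c,p_2)\ge u_1(u,p_2)$ whenever $u$ has a neighbour different from $p_2$; and if $N_I(u)\subseteq\{p_2\}$ or $N_I(u)=\emptyset$, then player~$1$ wins nothing but $u$ from $u$, so any $c\in C\setminus\{p_2\}$ does at least as well. This proves the lemma.

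The second step identifies the static game. If $p_1,p_2\in C$ are distinct, then a short computation (using again that both players reach all of $C$ at time~$1$, so that every vertex of $C\setminus\{p_1,p_2\}$ and every $w\in I$ outside $N_I(p_1)\cup N_I(p_2)$ is equidistant from the two players) yields $U_1(p_1,p_2)=\{p_1\}\cup(N_I(p_1)\setminus N_I(p_2))$, hence $u_1(p_1,p_2)=g(p_1,p_2):=1+|N_I(p_1)\setminus N_I(p_2)|$. Using $|S\setminus T|=\tfrac12(|S\triangle T|+|S|-|T|)$ this rewrites as $g(c_1,c_2)=1+\tfrac12\bigl(|N_I(c_1)\triangle N_I(c_2)|+|N_I(c_1)|-|N_I(c_2)|\bigr)$. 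I would then pick $(p_1^\ast,p_2^\ast)$ to be a pair of distinct vertices of $C$ maximising $F(c_1,c_2):=|N_I(c_1)\triangle N_I(c_2)|+|N_I(c_1)|+|N_I(c_2)|$. For any $c\neq p_2^\ast$ we get $g(c,p_2^\ast)=1-\tfrac12|N_I(p_2^\ast)|+\tfrac12 F(c,p_2^\ast)\le g(p_1^\ast,p_2^\ast)$ by maximality of $F$, and symmetrically $g(p_1^\ast,c)\le g(p_1^\ast,p_2^\ast)$ for $c\neq p_1^\ast$; together with the localisation lemma (which bounds \emph{every} deviation, in particular deviations into $I$, by the best value achievable inside $C$) this shows that $(p_1^\ast,p_2^\ast)$ is a Nash equilibrium.

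The main obstacle is the localisation lemma: it requires careful reasoning about foremost temporal walks in a shrinking graph — that a walk leaving $I$ must pass through $C$, that any clique vertex ``catches up'' with such a walk at time~$1$, and the handling of the degenerate case where $u$'s unique neighbour is $p_2$ itself. Once localisation is available, the remaining argument is the clean static optimisation over $F$ above; note that this is exactly the mechanism that can fail in the reverse game, where a clique vertex does not dominate its neighbours in $I$.
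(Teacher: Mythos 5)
Your proposal is correct and essentially matches the paper's proof: both reduce to profiles inside the clique via the same payoff formula $u_1(p_1,p_2)=1+|N_I(p_1)\setminus N_I(p_2)|$ (the paper merely asserts the payoff-$1$ bound for deviations into $I$, which you prove via the imitation/catch-up argument), and your potential $F(c_1,c_2)$ is exactly $2\,|N_I(c_1)\cup N_I(c_2)|$, the very quantity the paper shows strictly increases along improvement steps---you simply take a global maximizer of it instead of arguing that best-response dynamics cannot cycle. One harmless slip: the constant should be $g(c,p_2^\ast)=1-|N_I(p_2^\ast)|+\tfrac12 F(c,p_2^\ast)$ rather than $1-\tfrac12|N_I(p_2^\ast)|+\tfrac12 F(c,p_2^\ast)$, which does not affect the argument since the constant depends only on $p_2^\ast$.
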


\begin{proof}
  Let~$\mathcal S = (V, (E_t)_{t=1}^\infty)$ with~$V=C\cup I$, where~$C$ forms a clique in~$S_1=\mathcal S_\downarrow$ and~$I$ an independent set.
  We assume for each vertex~$v\in I$ that it is not adjacent to all vertices in~$C$, since otherwise we could remove~$v$ from~$I$ and add it to~$C$.
  Hence, we can also assume~$|C|\ge 2$, since otherwise a Nash equilibrium trivially exists.
  We now show that there exists a Nash equilibrium~$(p_1,p_2)$ with~$\{p_1,p_2\}\subseteq C$ and $p_1\neq p_2$.
  To this end, observe that in this case we have~$U_1(p_1,p_2)=\{p_1\}\cup N_I(p_1)\setminus N_I(p_2)$ and~$U_2(p_1,p_2)=\{p_2\}\cup N_I(p_2)\setminus N_I(p_1)$, where~$N_I(v)$ denotes the set of neighbors of~$v$ in~$I$.
  Clearly, no player can improve by choosing a vertex in~$I$ since the payoff then is~$1$.
  Next, we show that the players cannot improve arbitrarily often with vertices in~$C$.
  Assume towards a contradiction that there exists an infinite sequence~$(v_1,w_1),(v_2,w_1),(v_2,w_2),\ldots$ of profiles with~$u_1(v_{i+1},w_i)>u_1(v_i,w_i)$ and~$u_2(v_{i+1},w_{i+1}) > u_2(v_{i+1},w_i)$ for all~$i\ge 1$. Note that this is equivalent to
  \begin{align*}
	|N_I(v_i)| - |N_I(v_i)\cap N_I(w_i)| &< |N_I(v_{i+1})| - |N_I(v_{i+1})\cap N_I(w_i)| \text{ and}\\
	|N_I(w_i)| - |N_I(v_{i+1})\cap N_I(w_i)| &< |N_I(w_{i+1})| - |N_I(v_{i+1})\cap N_I(w_{i+1})|.
  \end{align*}
Since the number of different profiles is finite, there exists a subsequence~$(v_1,w_1),\ldots,(v_i,w_j)$ with~$(v_i,w_j)=(v_1,w_1)$ (wlog).
  But this yields the contradiction
  \begin{align*}
    |N_I(v_1)| + |N_I(w_1)| - |N_I(v_1)\cap N_I(w_1)| &< |N_I(v_2)| + |N_I(w_1)| - |N_I(v_2)\cap N_I(w_1)|\\
                                                      &< |N_I(v_2)| + |N_I(w_2)| - |N_I(v_2) \cap N_I(w_2)|\\
                                                      &< \cdots\\
                                                      &< |N_I(v_i)| + |N_I(w_j)| - |N_I(v_i)\cap N_I(w_j)|\\
    								   &= |N_I(v_1)| + |N_I(w_1)| - |N_I(v_1)\cap N_I(w_1)|.
  \end{align*}
  Hence, there exists a profile where both players cannot improve, that is, a Nash equilibrium.
\end{proof}

\section{Conclusion}\label{sec:conc}

We analyzed Nash equilibria for the classic and the reverse temporal Voronoi game and highlighted some major differences depending on the considered temporal graph.

As regards open questions, note that the classes of temporal graphs we considered already settle the question of guaranteed existence of a Nash equilibrium for most graph classes commonly considered in the literature.
A possible direction for future work would be to further restrict the temporal
behavior of the temporal graph to grow or shrink in a more specific way.
For example, it can be shown that on temporal cycles where at most one edge changes a Nash equilibrium always exists.
Another direction is to study other variants of the temporal Voronoi game.
Here, a natural question is whether Nash equilibria exist for more than two players.
It is also interesting to study the game when the players are allowed to choose more than one vertex initially or if the temporal distance is defined differently (e.g.~with faster arrival instead of earlier).
Finally, it might also be fruitful to investigate the existence of other forms of equilibria, e.g.~when introducing a certain cost for changing.

\bibliographystyle{plainnat}
\newcommand*{\doi}[1]{\href{https://doi.org/#1}{\nolinkurl{doi:#1}}}
\bibliography{ref}

\end{document}